\newcommand{\I}[1]{  \mathbbm{1}_{ \left\{  #1 \right\} }   }
\newcommand{\N}[1]{\mathbbm{N}}
\newcommand{\A}{ {\mathcal{A}} }
\newcommand{\Q}{ {\mathcal{Q}} }
\newtheorem{theorem}{theorem}[section]
\newtheorem{prop}[theorem]{Proposition}
\newtheorem{lem}[theorem]{Lemma}
\newtheorem{cor}[theorem]{Corollary}
\newtheorem{obs}[theorem]{Observation}
\begin{document}

\title{On the impact of TCP and per-flow scheduling on Internet performance (extended version)}

\author{Giovanna Carofiglio$^1$ and Luca Muscariello$^2$\\
$^1$ Bell Labs, Alcatel-Lucent, France, $^2$ Orange Labs Paris\\}

\email{giovanna.carofiglio@alcatel-lucent.com,  luca.muscariello@orange-ftgroup.com}

\version{1.0 ORC1}

\date{31/07/2009}

\keywords{Congestion control; TCP; Per-flow scheduling}

\maketitle

\abstract{
Internet performance is tightly related to the properties of TCP and UDP protocols,
jointly responsible for the delivery of the great majority of Internet traffic.
It is well understood how these protocols behave under FIFO queuing
and what the network congestion effects. However, no comprehensive analysis
is available when flow-aware mechanisms such as per-flow scheduling and dropping policies
are deployed. Previous simulation and experimental results leave a number of unanswered questions.
In the paper, we tackle this issue by modeling via a set of fluid non-linear ODEs
the instantaneous throughput and the buffer occupancy of $N$ long-lived TCP sources under three per-flow scheduling
disciplines (Fair Queuing, Longest Queue First, Shortest Queue First) and with longest queue drop buffer management.
We study the system evolution and analytically characterize the stationary regime:
closed-form expressions are derived for the stationary throughput/sending
rate and buffer occupancy which give thorough understanding of short/long-term fairness for TCP traffic.
Similarly, we provide the characterization of the loss rate experienced by UDP flows
in presence of TCP traffic.
As a result, the analysis allows to quantify benefits and drawbacks related to the deployment of
flow-aware scheduling mechanisms in different networking contexts.
The model accuracy is confirmed by a set of $ns2$ simulations and by the evaluation
of the three scheduling disciplines in a real implementation in the Linux kernel.}

\section{Introduction}\label{sec:intro}

\subsection{Congestion control and per-flow scheduling}
Most of the previous work on rate controlled sources, namely TCP, has considered networks
employing FIFO queuing and implementing a buffer management scheme like drop tail or AQM (e.g. RED \cite{misra}).\\
As flow-aware networking gains momentum in the future Internet arena (see \cite{nick09}), per-flow scheduling
already holds a relevant position in today networks:  it is often deployed in
radio HDR/HSDPA \cite{umts}, home gateways \cite{ostallo,sqf-demo}, border IP routers \cite{hpsr,ancs05},
IEEE 802.11 access points \cite{tan}, but also ADSL aggregation networks.
Even if TCP behavior has been mostly  investigated under FIFO queuing, in
a large number of
significant network scenarios the adopted queuing scheme is not FIFO.\\
It is rather fundamental, then, to explore the performance of TCP in these less studied
cases and the potential benefits that may originate by the application of per-flow scheduling
in more general settings, e.g. in presence of rate
uncontrolled sources, say UDP traffic.\\
In this paper we focus on some per-flow schedulers with applications in wired networks:
\begin{itemize}
\item
\textit{Fair queuing} (FQ) is a well-known mechanism to impose fairness into a network link and it has
already been proved to be feasible and scalable on high rate links (\hspace{-0.1mm}\cite{suter, korte04, korte05}).
However, FQ is mostly deployed in access networks, because
of the common belief that per-flow schedulers are not scalable, as the number of running flows
grows in the network core.
\item
\textit{Longest Queue First} (LQF).
In the context of switch scheduling for core routers with virtual output queuing (VOQ),
throughput maximization has motivated the introduction of
an optimal input-output ports matching algorithm, maximum weight matching (MWM, see \cite{nick}),
that, in the case of a N-inputs-1-output router, reduces to selecting longest queues first.
Due to its computational complexity, MWM has been replaced by a number of heuristics, all equivalent
to a LQF scheduler in a multiplexer.
All results on the optimality of MWM refer to rate uncontrolled sources,
leaving open questions on its performance in more general traffic scenarios.

\item
\textit{Shortest Queue First} (SQF).
The third per-flow scheduler under study is a more peculiar and less explored scheduling discipline
that gives priority to flows generating little queuing.
Good properties of SQF  have been experimentally observed in \cite{ostallo,sqf-demo} in the context of home gateways
regarding the implicit differentiation provided for UDP traffic.
In radio access networks, as HDR/HSDPA, SQF has been shown, via simulations, to improve TCP completion times (e.g.\cite{umts}).
However, a proper understanding of the interaction between SQF and TCP/UDP traffic still lacks.

\end{itemize}

Multiple objectives may be achieved through per-flow scheduling such as fairness,
throughput maximization, implicit service differentiation. Therefore, explanatory models are necessary
to give a comprehensive view of the problem under general traffic patterns.
\subsection{Previous Models}
A vast amount of analytical models is behind the progressive understanding of the many facets of Internet congestion control
and has successfully contributed to the solution of
a number of networking problems. To cite a few, fair rate allocation \cite{Kelly98, massoulie}, TCP throughput evaluation
(see \cite{padhye,misra,altman,baccelli,ajmone}) and maximization (Split TCP \cite{carofiglioSplit}, multi-path TCP \cite{multipathTCP}),
buffer sizing \cite{Wischik}, etc.
Only recently, some works have started modeling
TCP under per-flow scheduling in the context of switch scheduling (\hspace{-0.2mm}\cite{giaccone, keslassyTCP}),
once made the necessary distinction between per-flow and switch scheduling, the latter being aware of input/output ports and not
of single user's flows.
More precisely, the authors focus on the case of one flow per port, where both problems basically fall into the same.
In \cite{giaccone} a discrete-time representation of the interaction between
rate controlled sources and LQF scheduling is given,
under the assumption of a per-flow RED-like buffer management policy that distributes early losses
among flows proportionally to their input rate (as in \cite{fred, afd}).
Packets are supposed to be chopped in fixed sized cells as commonly done
in high speed switching architectures.
\\In \cite{keslassyTCP} a similar discrete-time representation of scheduling dynamics is adopted for LQF and FQ,
with the substantial difference of separate queues of fixed size, instead of
virtual queues sharing a common memory.
Undesirable effects of flow's stall are observed in \cite{keslassyTCP}
and not in \cite{giaccone} due to the different buffer management policy.
Indeed, the assumption of separate physical queues is not of minor importance, since it may lead to flow  stall
as a consequence of tail drop on each separate queue.
Such unfair and unwanted effects
can be avoided using virtual queues with a shared memory jointly with \textit{Longest Queue Drop} (LQD)  buffer management
(see\cite{suter}).
Such phenomenon is known and it has been noticed for the first time in \cite{suter}.
In this way, in case of congestion, sources with little queuing are not penalized by packet drops,
allocated, instead, to more greedy flows.
Both works (\hspace{-0.1mm}\cite{giaccone},\cite{keslassyTCP}) are centered on TCP modeling, while considering UDP
(\hspace{-0.1mm}\cite{keslassyTCP}) only
as background traffic for TCP without any evaluation of its performance.
Finally, none of the aforementioned models has been solved analytically,
but only numerically and compared with network simulations.
\subsection{Contribution}
The paper tackles the issue of modeling the three above mentioned
flow-aware scheduling disciplines in presence of TCP/UDP traffic in a comprehensive
analytical framework based on a fluid representation of system dynamics.
We first collect some experimental results in Sec.\ref{sec:exp-rem}.
In order to address the questions left open,
we develop in Sec.\ref{sec:model} a fluid deterministic model
describing through ordinary differential equations either TCP sources behavior either virtual
queues occupancy over time.
Model accuracy is assessed in Sec.\ref{sec:accuracy} via the comparison against $ns2$ simulations.
In presence of $N=2$ TCP flows, the system of ODEs presented in Sec.\ref{sec:model} is
analytically solved in steady state and
closed-form expressions for mean sending rates and throughputs are provided in Sec.\ref{sec:Analytical-results}
for the three scheduling disciplines under study (SQF, LQF and FQ).
The model is, then, generalized to the case of $N>2$ flows and
in presence of UDP traffic in Sec.\ref{sec:extensions}.
Interesting results on the UDP loss rate are derived in a mixed TCP/UDP scenario.
A numerical evaluation of analytical formulas  is carried out in Sec.\ref{sec:numerics} in comparison with packet-level simulations.
Finally, Sec.\ref{sec:discussion} summarizes the paper contribution and sheds light on potential applications of per-flow scheduling and
particularly SQF.

\section{Experimental remarks}\label{sec:exp-rem}
In this section we consider a simple testbed as a starting point of our analysis.
An implementation of FQ is available in Linux and, in addition, we have developed
the two missing modules needed in our context, LQF and SQF.
All three per-flow schedulers have similar implementations: a common memory is shared by virtual queues,
one per flow.
Packets belong to the same flow if they share the same
5-tuple (IP src and dst, port numbers, protocol) and in case of memory
saturation the flow with the longest queue gets a packet drop (LQD).
Our small testbed is depicted in Fig.\ref{fig:graph-test} where sender and receiver
employ the Linux implementation of TCP Reno. Different round trip times (RTTs) on each flow are obtained adding
emulated delays through netem and iptables (\hspace{-0.2mm}\cite{lartc}).
Three TCP flows with different RTTs  of 10, 100 and 200ms are run in parallel in the testbed,
and their long term throughput is measured at the receiver. Each test lasts 5 minutes and
the throughput is averaged over 10 runs. The result is reported in Fig.\ref{fig:graph-test}
but they can also be seen through the Jain index fairness J (\hspace{-0.2mm}\cite{jain})
in Tab.\ref{tab:fairness}.
About the long term throughput, we observe that
FQ is fair and the throughput is not affected by RTTs. On the contrary, LQF and SQF
suffer from a RTT bias: LQF favors flows with small RTT, while SQF favors flows with large RTT.
Moreover Tab.\ref{tab:fairness} shows that, while FQ and LQF show no difference between long and short term,
SQF is much more unfair at short time scales as J reduces to 0.55 in the short term, while being  0.735 in the long term
(J=1/N means that one flow out N gets all the resource over the
time window, J=1 is for perfect sharing).
\begin{figure}[htbp]
\centering
$\begin{array}{ccc}
\includegraphics[width=0.2\textwidth, height=0.3\textwidth]{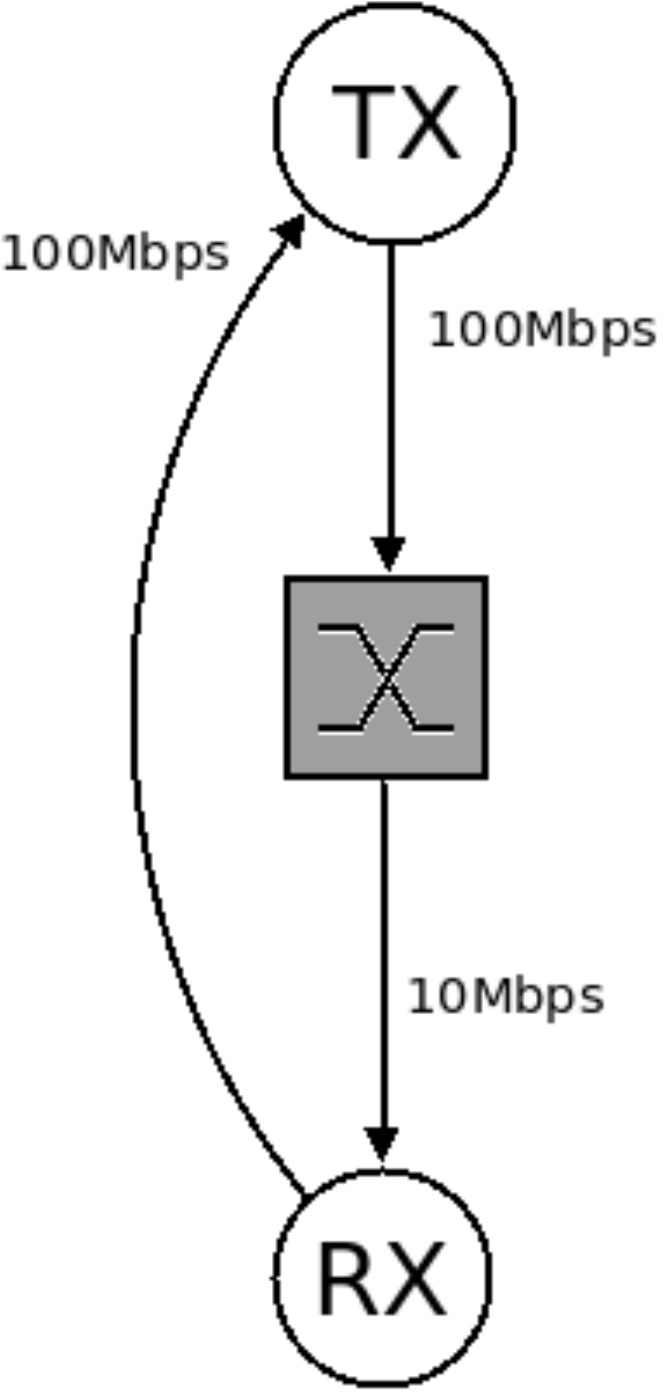}\hspace{4mm}
&
\includegraphics[width=0.4\textwidth, height=0.3\textwidth]{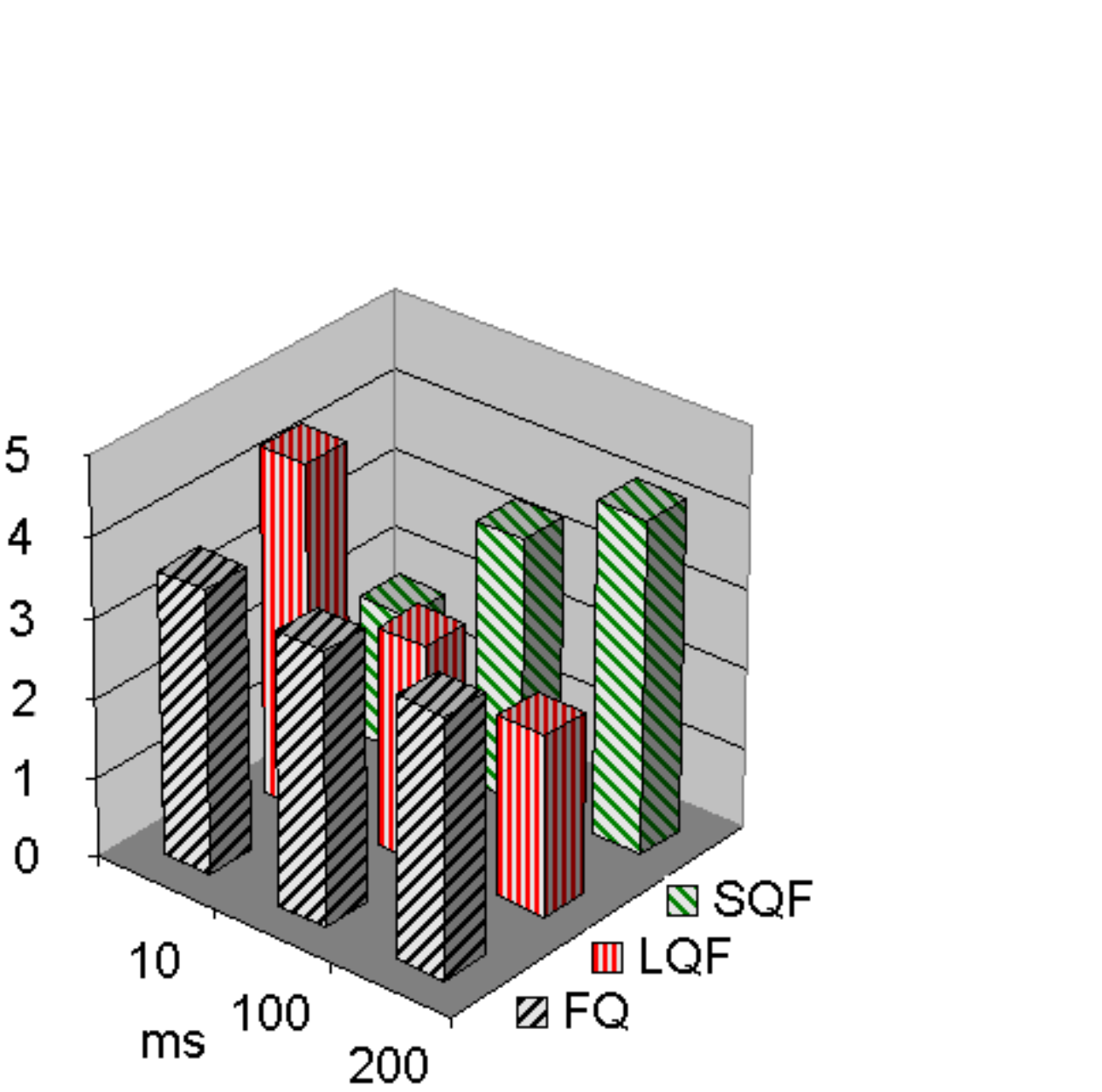}
\end{array}$
\caption{On the left: the experimental scenario. On the right: throughput's allocations.}
\label{fig:graph-test}
\end{figure}
Another interesting metric is the ratio between the received rate (throughput) and
the sending rate $R_{rcv}/R_{snd}$ which is equal to 99\% for FQ,
and  LQF and 95\% for SQF. In fact, SQF induces more losses w.r.t. the other two schedulers.
Link utilization is not affected by the scheduler employed and is always about 100\%.
All these phenomena are difficult to study and explain properly through a testbed and should
preferably be analyzed through an explanatory model.
A number of questions should find an answer through the analysis:
\begin{itemize}
\item[i)] what is the instantaneous sending rate/throughput of TCP under these three schedulers?
\item[ii)] what is the long term throughput?
\item[iii)] in a general network traffic scenario, including also UDP flows, what
is the performance of the whole system?
\end{itemize}
\begin{table}
\centering
\begin{tabular}{|cc|cc|cc|}
\hline
 \multicolumn{2}{|c}{{FQ}} & \multicolumn{2}{|c}{{LQF}} &\multicolumn{2}{|c|}{{SQF}} \\
\hline
\hline
ST & LT  & ST  & LT & ST & LT \\
0.999 & 0.999 & 0.734  & 0.736 & 0.55 & 0.735 \\
\hline
\end{tabular}
\caption{Jain index of fairness on short (0.5s) and long term (5min).}
\label{tab:fairness}
\end{table}

\section{Fluid Model}\label{sec:model}
The network scenario under study is a single bottleneck link of capacity $C$ shared by a finite number $N$ of
long-lived traffic sources of rate controlled (TCP) and rate uncontrolled (UDP) kind.
We first consider the case of rate controlled TCP sources only. The mixed TCP/UDP scenario is studied
in sec.\ref{sec:UDP}.

\subsection{Assumptions}\label{subsec:assumptions}
Each traffic source is modeled at the flow timescale through a  deterministic fluid model where the discrete packet
representation is replaced by a continuous one, either in space and in time.
Previous examples of fluid models of TCP/UDP traffic can be found in \cite{misra,ajmone,baccelli}.
Let us summarize the assumptions behind the model and give the notation (reported in Tab.\ref{tab:notation}).
\begin{itemize}
\item  A rate controlled source is modeled by a TCP Reno flows in \textit{congestion avoidance phase}, driven
by the AIMD (Additive Increase Multiplicative Decrease) rule, thus neglecting initial slow start phase and fast recovery.
\item  The \textit{round-trip delay}, $R_k(t)$ of TCP flow $k$, with  $k=1,...,N$, is defined as the sum of two terms:
$$\label{eqn:RTT} R_k(t)= PD_k+\frac{Q_k(t)}{C},$$
where $PD_k$ is the constant round-trip propagation delay, (which includes the transmission delay) and $Q_k(t)/C$ is the queuing delay,
with $Q_k(t)$ denoting the instantaneous virtual queue associated to flow $k$.
Remark that in presence of a `per-flow' scheduler  each flow $k$ is mapped into a virtual queue $Q_k$. The total queue occupancy,
denoted as $Q(t)=\sum_k Q_k(t)$, is limited to $B$.
\item \textit{Sending rate.}  The
instantaneous sending rate of flow $k$, $k=1,2,\dots,N$, denoted by $A_k(t)$, is assumed proportional to the congestion window (in virtue of Little's law), $$A_k(t)=W_k(t)/R_k(t).$$
\item \textit{Buffer management mechanism.} Under the assumption of \textit{ longest queue drop}
(LQD) as buffer management mechanism, whenever the total queue saturates, the TCP flow with the longest
virtual queue is affected by  packet losses and consequent window halvings.
\end{itemize}
\subsection{Source equations}\label{subsec:sources}
According to the usual fluid deterministic representation
(\cite{ajmone,baccelli,carofiglioSplit}),
the sending rate linearly increase as $1/R^2_k(t)$ in absence of packet losses.
Such representation usually employed for FIFO schedulers has to be modified in presence of per-flow schedulers when flows are not all simultaneously in service.
More precisely, it is reasonable to assume that flows do not increase their sending rate when not in service. It follows that in absence of packet losses, the increase
of the sending rate is given by $$\frac{1}{R^2_k(t)} \left(\I{Q(t)=0}+\frac{D_k(t)}{C}\I{Q(t)>0}\right).$$
The last expression accounts for linear increase whenever the total queue is empty and also for the reduction of the increase factor when multiple flows are simultaneously in service, proportional to their own departure rate,
$D_k(t)$.\\
Whenever a congestion event takes place (i.e. when the total queue $Q$ reaches saturation, $Q(t)=B$) and the virtual queue $Q_k(t)$ is the largest one, flow $k$ starts loosing packets in the queue at a rate proportional to the exceeding input rate at the queue, that is $(A_k(t)-C)^+$.
In addition, it halves its sending rate $A_k(t)$ proportionally to $(A_k(t)-C)^+$ (we use the convention $(\cdot)^+=\max(\cdot,0)$).
Therefore, the instantaneous sending rate of flow $k$, $k=1,\dots,N$, satisfies the following ODE (ordinary differential equation):
\begin{align}\label{eqn:ODEsource}
&\frac{dA_k(t)}{dt}=\frac{1}{R^2_k(t)} \left(\I{Q_t=0}+
\frac{D_k(t)}{C}\I{Q_t>0}\right)
-\frac{A_k(t)}{2}L_k(t-R_k(t))
\end{align}
where
$D_k$  denotes the departure rate for virtual queue $k$ at time $t$ (the additive increase takes place only when the
corresponding virtual queue is in service) and
$L_k(t)$ denotes the loss rate of flow $k$ defined by
\begin{align}\label{def:L}
 &L_k(t)=
\begin{cases}
(A(t)-C)^+\I{Q(t)=B}\I{k=\arg\max_{j} Q_j(t)} &\text{ \hspace{-6mm}\small{if }\scriptsize{$Q_j(t)<Q_k(t)$, $\forall j\neq k$}}\\
(A_k(t)-D_k(t))^+\I{Q(t)=B}\I{k=\arg\max_{j} Q_j(t)}&\text{\small{otherwise.}}
\end{cases}
\end{align}
The loss rate is proportional to the fraction of the total arrival rate $A(t)=\sum_k A_k(t)$ that
exceeds link capacity when there exists only one longest queue. In presence of multiple longest queues, the allocation of losses among flows is made according to the difference between the input and the output rate of each flow.
Of course, the reaction of TCP to the losses is delayed according to the round trip time $R_k(t)$.
\begin{obs}
Note that the assumption of zero rate increase for non-in-service flows is a consequence of the fact that the acknowledgement's rate is null in such phase. On the contrary, for FIFO schedulers, all flows are likely to loose packets and adjust their sending rate simultaneously, so  one can reasonably argue that the acknowledgment rate is never zero.
\end{obs}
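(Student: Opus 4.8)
The statement is a modelling justification rather than a self-contained theorem, so the argument I would give is a fluid-limit derivation of the additive-increase term in~\eqref{eqn:ODEsource} from TCP's self-clocking property. First I would recall that in congestion avoidance each acknowledgement received by source $k$ inflates the congestion window by $1/W_k$ segments, so that on the slow time scale over which $W_k$ varies one has $dW_k/dt = \lambda^{\mathrm{ack}}_k(t)/W_k(t)$, where $\lambda^{\mathrm{ack}}_k$ is the instantaneous rate of returning ACKs. Next I would argue that, with an uncongested reverse path, ACKs are produced one-for-one by data segments reaching the receiver, hence $\lambda^{\mathrm{ack}}_k(t)$ equals the departure rate $D_k$ of virtual queue $k$ out of the bottleneck, up to a shift by part of the round-trip time; inside the increase term this shift is irrelevant (it is retained only in the loss term $L_k(t-R_k(t))$), so $dW_k/dt = D_k(t)/W_k(t)$ whenever the buffer is busy. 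When instead $Q(t)=0$ the bottleneck is emptied at the sources' own pace, every in-flight segment elicits an ACK one RTT later, and the familiar $dW_k/dt = 1/R_k(t)$ is recovered.

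To reach the form appearing in~\eqref{eqn:ODEsource} I would then differentiate $A_k = W_k/R_k$, keep only the window-driven contribution, substitute $W_k = A_k R_k$, and on the busy branch use $A_k \simeq D_k$ together with the normalisation $\sum_j D_j = C$: the fraction $D_k/C$ is precisely the share of each infinitesimal unit of server time that the work-conserving per-flow scheduler grants to virtual queue $k$ while several queues are simultaneously backlogged, and it is this share that scales down the per-RTT increment. The qualitative conclusion claimed in the observation is then immediate: if the per-flow scheduler starves flow $k$, then $D_k(t)=0$, so $dW_k/dt=0$ and $A_k$ does not increase, because the ACK clock has stopped.

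For the FIFO comparison I would note that a single FIFO queue, when non-empty, is served continuously at rate $C$ and carries segments of all backlogged flows interleaved in proportion to their arrival rates, so every backlogged flow keeps $D_k(t)>0$ and a live ACK stream at almost all times; hence the unmodified $1/R^2_k$ increase applies to all flows at once. Likewise a drop-tail or AQM buffer drops from the \emph{aggregate}, so at a congestion epoch essentially every flow registers a loss within one RTT and reacts together, unlike LQD under per-flow scheduling where only the longest virtual queue is hit. The main obstacle in turning this into a rigorous statement is twofold: making precise the passage from the per-ACK increment rule to the ODE requires a time-scale-separation argument, since the window must vary slowly relative to the ACK inter-arrival time; and pinning down the factor $D_k/C$ presupposes the idealisation of the scheduler as allotting each busy virtual queue the instantaneous service share $D_k/C$, which is exactly the fluid abstraction adopted in Section~\ref{subsec:sources} and therefore cannot be \emph{derived} here, only rendered self-consistent.
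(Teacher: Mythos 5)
The paper offers no proof of this observation beyond the sentence itself: it is a modelling remark, and the justification sketched in Sec.~\ref{subsec:sources} is precisely the one you reconstruct --- the additive increase is clocked by returning acknowledgements, the ACK rate equals the departure rate $D_k$ at the bottleneck, hence a starved virtual queue freezes the window, whereas a busy FIFO queue keeps every backlogged flow's ACK stream alive and spreads drops across the aggregate. So your qualitative argument matches the paper's intent and adds useful detail (the explicit $dW_k/dt=\lambda^{\mathrm{ack}}_k/W_k$ step and the time-scale-separation caveat). One wrinkle in the quantitative part, though: the self-clocking rule $dW_k/dt=D_k/W_k$ combined with $W_k=A_kR_k$ gives $dA_k/dt=D_k/(A_kR_k^2)$, and your subsequent substitution $A_k\simeq D_k$ collapses this to $1/R_k^2$, i.e.\ it erases the scaling factor rather than producing the $D_k/C$ of (\ref{eqn:ODEsource}). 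The paper's factor $D_k/C$ does not follow from the per-ACK increment algebra; it is the time-sharing idealisation you invoke only at the end (flow $k$ is ``in service'' a fraction $D_k/C$ of each instant, during which its clock runs at the full FIFO speed $1/R_k^2$). Both forms vanish when $D_k=0$ and coincide when flow $k$ monopolises a saturated server, so the observation's qualitative claim is unaffected, but you should not present the ACK-counting chain and the time-sharing normalisation as yielding the same ODE --- they agree only in the two extreme cases that matter here.
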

\begin{table}[tb]
\begin{footnotesize}
\centering
\begin{tabular}{|l|l|}
\hline
 $C$ & Link capacity   \\
 $N$ & Number of TCP flows  \\
 $R_k(t)$ & Round trip delay of flow $k$, $k=1,...,N$ \\
 $Q_k(t)$ & Virtual queue of flow $k$, $k=1,...,N$ \\
 $Q(t)$ & Total queue of finite size $B$ \\
 $A_k(t)$ & Sending rate of flow $k$, $k=1,...,N$\\
 $D_k(t)$ & Departure rate (or throughput) of flow $k$, $k=1,...,N$\\
 $L_k(t)$ & Loss rate of flow $k$, $k=1,...,N$\\
 $\Q^{MAX}$ & Set of longest queues ($\Q^{MIN}$ similarly defined)\\
 $\A_t^B$  & Set of bottlenecked flows, i.e. $A_k(t)\ge C/N$, $k=1,...,N$ \\
\hline
\end{tabular}
\caption{Notation}\label{tab:notation}
\end{footnotesize}
\end{table}
\subsection{Queue disciplines}\label{subsec:queues}
The instantaneous occupation of virtual queue $k$, $k=1,\dots,N$, obeys to the fluid ODE:
\begin{align}\label{eqn:ODEqueuek}
&\dfrac{dQ_k(t)}{dt} =A_k(t) - D_k(t) - L_k(t).
\end{align}
In this paper we consider three different work-conserving service disciplines (for which $\sum_k D_k = C \I{Q(t) >0}$):
FQ (Fair Queuing), LQF (Longest Queue First), SQF (Shortest Queue First).
The departure rate $D_k(t)$ varies according to the chosen service discipline:
\begin{itemize}
\item FQ:
$$D_k(t)=
 \begin{cases}
\frac{C-\sum_{j \notin \A_t^B} A_j(t)} { |\A_t^B|} & \text{if $A_k(t)\in
 \A_t^B$}\\
A_k(t)& \text{if $A_k(t)\notin \A_t^B$}
\end{cases}$$
\item LQF: $$D_k(t)=C \frac{A_k(t)}{\sum_{j \in \Q^{MAX}}A_j(t)}\I{ Q_k(t)=\max_j Q_j(t)}$$
\item SQF:$$D_k(t)=C  \frac{A_k(t)}{\sum_{j \in \Q^{MIN}}A_j(t)}\I{ Q_k(t)=\min_j Q_j(t)}$$
\end{itemize}
The total loss rate is denoted by $L(t)=\sum_k L_k(t)$.	
The instantaneous occupation of the total queue $Q(t)$ is, hence, given by
\begin{align}\label{eqn:ODEqueue}
&\dfrac{dQ(t)}{dt} =A(t) - C\I{Q(t)>0} - L(t).
\end{align}
\section{Model accuracy}\label{sec:accuracy}
Before solving the model presented in Sec.\ref{sec:model}, we present some
packet level simulations using $ns2$ (\hspace{-0.2mm}\cite{ns2}) to show the accuracy of the
model.
We have implemented LQF and SQF in addition to FQ which is already available in $ns2$;
all implemented schedulers use shared memory and LQD buffer management.
Network simulations allow to monitor some variables more precisely than in a test-bed
as TCP congestion window (cwnd), and virtual queues time evolutions.
ending rate evolution is then evaluated as the ratio cwnd/RTT and queue evolution is
measured at every packet arrival, departure and drop.
This section includes some samples of the large number of simulations
run to assess model accuracy.
We present a simple scenario than counts two TCP flows with $RTTs=2ms$, $6ms$
sharing the same bottleneck of capacity $C=10$Mbps, with a line card
using a memory of size $150$kB. $ns2$ simulates IP packets of fixed MTU size
equal to 1500B.
In Figg. \ref{fig:sqf},\ref{fig:fq-lqf} we compare the system evolution
predicted by (\ref{eqn:ODEsource})-(\ref{eqn:ODEqueue}) against
queue and rate evolution estimated in $ns2$ as the ratio cwnd/RTT (congestion window over round trip time, variable in time).
Besides the intrinsic and known limitations of fluid models,  not able to
capture the burstiness at packet-level (visible in the sudden changes of queue occupancy),
the match between packet level simulations
and model prediction is remarkable.
The short timescale oscillations
observed in SQF will be better explained in Sec.\ref{sec:Analytical-results}.
\begin{figure*}[!htb]
\centering
\hspace{-2mm}\includegraphics[width=0.8\textwidth]{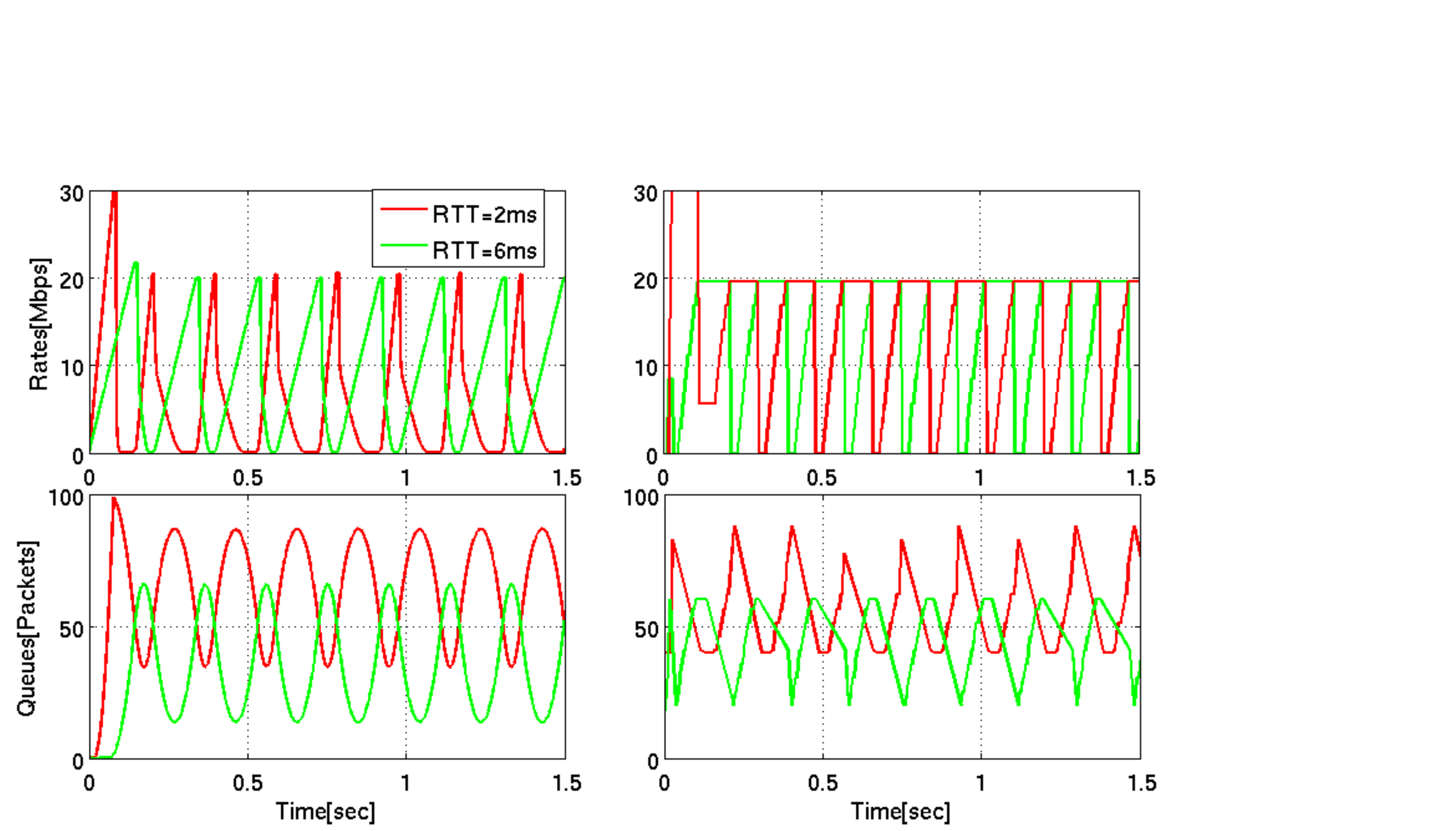}
\caption{Time evolution of rates (top) and queues (bottom) under SQF: the model on the left,  $ns2$ on the right.}
\label{fig:sqf}
\end{figure*}
\begin{figure*}[!htb]
\includegraphics[width=0.8\textwidth]{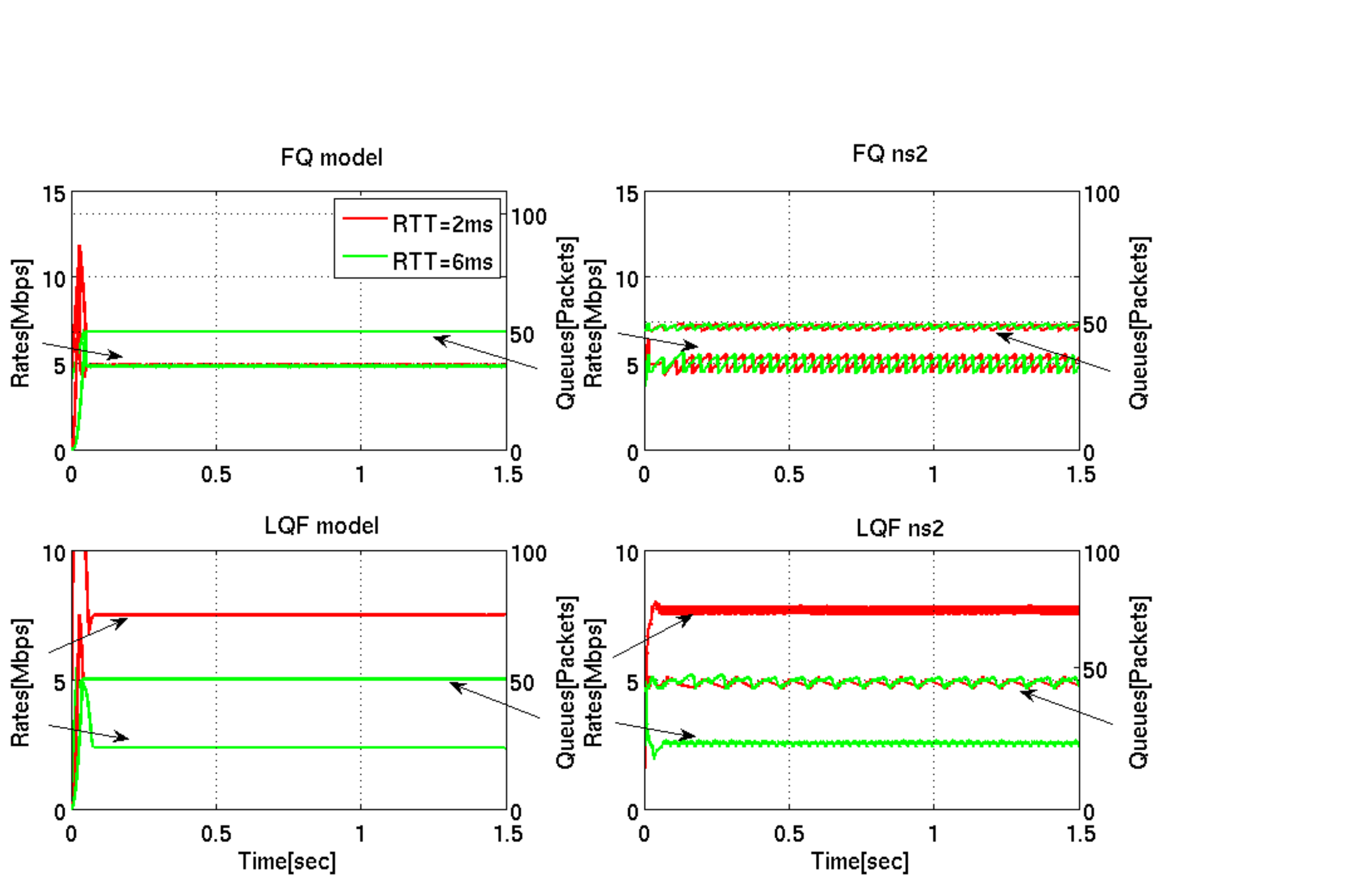}
\centering
\caption{Time evolution of rates and queues under FQ (top) and LQF (bottom): the model on the left,  $ns2$ on the right.}
\label{fig:fq-lqf}
\end{figure*}
\section{Analytical results}\label{sec:Analytical-results}
\vspace{-1mm}
Let us focus on the system of ODEs  (\ref{eqn:ODEsource})-(\ref{eqn:ODEqueue}) under the simplifying
assumption of instantaneous congestion detection
(the same assumption as in \cite{baccelli},\cite{carofiglioSplit}) and of constant round trip delay,
\begin{align}\label{eq:R}
R_k(t) \approx PD_k, k=1,\dots,N.
\end{align}
The last assumption is reasonable when the propagation delay term is predominant.
In addition, the numerical solution of (\ref{eqn:ODEsource})-(\ref{eqn:ODEqueue}) and
$ns2$ simulations confirm that the system behavior in presence of variable $R_k(t)$ and
delayed congestion detection appears to be not significantly different.
Eq.(\ref{eqn:ODEsource}) becomes
\begin{align}\label{eqn:ODEsource2}
&\frac{dA_k(t)}{dt} =\frac{1}{R^2_k} \left(\I{Q_t=0}+\frac{D_k(t)}{C}\I{Q_t>0}\right)-\frac{A_k(t)}{2}L_k(t).
\end{align}
In the following, we analytically characterize the solution of the system of ODEs
(\ref{eqn:ODEqueue})-(\ref{eqn:ODEsource2}) in presence of $N=2$ flows and
for the three scheduling disciplines under study. We first focus on SQF scheduling
discipline, before
studying the more intuitive behavior of the system under LQF and FQ in Sec.\ref{sec:LQF}, Sec.\ref{sec:FQ}.
\subsection{Shortest Queue First (SQF) scheduling discipline}\label{sec:SQF}
For the ease of exposition, denote $\alpha=\frac{1}{R_1^2}$ and $\beta=\frac{1}{R_2^2}$ and take $\alpha > \beta$ (the same arguments hold in the dual case $\alpha \le \beta$ where replacing $\alpha$ with $\beta$ and viceversa).
\begin{lem}
 The dynamical system described by (\ref{eqn:ODEqueue})-(\ref{eqn:ODEsource2}) under SQF scheduling discipline admits as unique stationary solution the limit-cycle composed by:
\begin{itemize}
 \item \textit{phase $A_2^{ON}$ }:
 $\forall t \in$ phase $A_2^{ON}$ (when  the origin coincides with the beginning of the phase)
\begin{align*}
&\widetilde{A}_1(t)= 2\sqrt{\beta}2C\widetilde{f}(2C,0,t), \qquad\widetilde{A}_2(t)=\beta t, \nonumber \\
&\widetilde{Q}_2(t)=\frac{B}{2}+\int_{0}^t (\beta u -C)du,\quad\widetilde{Q}_1(t)=B-Q_2(t),
\end{align*}
where $\widetilde{f}(\cdot)$ denotes the limiting composition of $f(\cdot)$ functions,
$\widetilde{f}(\cdot) \triangleq f\circ f \circ \dots \circ f
$ and
\begin{align}\label{def:f}
&f(a,b,t)\triangleq e^{-\frac{t}{2}(2b-2C+\beta t)}/ \\
&\left(2\sqrt{\beta}+a e^{\frac{(b-C)^2}{2\beta}}\sqrt{2\pi}\left(Erf\left(\frac{b-C+\beta t}{\sqrt{2\beta}}\right)-Erf\left(\frac{b-C}{\sqrt{2\beta}}\right)\right)\right) \nonumber
\end{align}
 \item \textit{phase $A_1^{ON}$}:
 $\forall t \in$ phase $A_1^{ON}$ (when  the origin coincides with the beginning of the phase)
\begin{align*}
&\widetilde{A}_1(t)= \alpha t, \qquad\widetilde{A}_2(t)=2\sqrt{\alpha}2C \widetilde{g}(2C,0,t), \nonumber \\
&\widetilde{Q}_1(t)=\frac{B}{2}+\int_{0}^t (\alpha u -C)du, \quad\widetilde{Q}_2(t)=B-Q_1(t),
\end{align*}
where $\widetilde{g}(\cdot)$ is defined by symmetry w.r.t. $\widetilde{f}(\cdot)$ and
\begin{align}\label{def:g}
&g(a,b,t)\triangleq e^{-\frac{t}{2}(2b-2C+\alpha t)}/  \\&\left(2\sqrt{\alpha}+a e^{\frac{(b-C)^2}{2\alpha}}\sqrt{2\pi}\left(Erf\left(\frac{b-C+\alpha t}{\sqrt{2\alpha}}\right)-Erf\left(\frac{b-C}{\sqrt{2\alpha}}\right)\right)\right) \nonumber
\end{align}
\end{itemize}
The duration of phase $A_1^{ON}$ is $2C/\alpha$, while that of phase $A_2^{ON}$ is $2C/\beta$.
The period to the limit cycle is, therefore, equal to\\
$\qquad\qquad T=2C\left( \frac{1}{\alpha}+\frac{1}{\beta}\right).$
\end{lem}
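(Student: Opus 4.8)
\medskip
\noindent\textbf{Proof outline.}
The plan is to use the piecewise-smooth structure of (\ref{eqn:ODEqueue})--(\ref{eqn:ODEsource2}) under SQF: integrate the dynamics exactly inside each mode, then glue the trajectory into a cycle by a fixed-point (Poincar\'e-map) argument. First I would reduce to the saturated-buffer regime. While $Q(t)<B$ there are no losses and every $A_k$ is non-decreasing, with the served flow strictly increasing, so $A(t)$ grows until $Q$ reaches $B$; once $Q(t)=B$, (\ref{eqn:ODEqueue}) gives $\dot Q=A-C-(A-C)^+\le 0$, and since the server never idles while TCP keeps pushing one has $A(t)>C$, hence $\dot Q=0$. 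Thus any stationary trajectory satisfies $Q(t)\equiv B$, so $Q_1(t)+Q_2(t)=B$ and, by LQD, the entire loss rate $L(t)=A(t)-C$ is charged to the currently longest virtual queue. Under SQF the shorter queue is served at rate $C$ and therefore drains while the other fills; given $Q_1+Q_2=B$, the served flow keeps the strictly shorter queue until the two queues meet at $B/2$, where service switches. This yields exactly the two modes of the statement, phase $A_2^{ON}$ ($Q_2<Q_1$, flow $2$ served and flow $1$ longest) and phase $A_1^{ON}$ (its mirror image), separated by the surface $\{Q_1=Q_2=B/2\}$.

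Next I would integrate the ODEs inside a phase, say $A_2^{ON}$. The served flow has $D_2=C$ and $L_2=0$ (it is not the longest), so by (\ref{eqn:ODEsource2}) its rate is linear, $\widetilde A_2(t)=\beta t$ up to the value inherited at the phase entry. The starved flow has $D_1=0$ and $L_1=A-C$, so $\dot A_1=-\tfrac{A_1}{2}(A_1+A_2-C)$, a Bernoulli equation; the substitution $w=1/A_1$ linearises it into $\dot w-\tfrac{A_2(t)-C}{2}\,w=\tfrac12$, solved by the integrating factor $\mu(t)=\exp\!\big(-\tfrac12\int_0^t(A_2(s)-C)\,ds\big)$ whose Gaussian primitive is precisely the error-function combination in (\ref{def:f}); this gives $\widetilde A_1(t)$ in the stated form $2\sqrt{\beta}\,a\,f(a,b,t)$. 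The virtual queues then follow from $\dot Q_2=A_2-C$ and $Q_1=B-Q_2$, yielding the displayed integrals, and phase $A_1^{ON}$ is the same computation with $1\leftrightarrow 2$, $\alpha\leftrightarrow\beta$, producing $g$ in (\ref{def:g}). Along the way one must check the mode is self-consistent: $A(t)>C$ throughout (setting $h:=A-C$ gives $\dot h=-\tfrac{A_1}{2}h+\beta$, which equals $\beta>0$ at $h=0$, so $h$ cannot vanish), $0\le Q_k(t)\le B$ (which forces the standing large-buffer assumption, e.g.\ $B\ge C^2\max(R_1^2,R_2^2)$), and the longest queue does not flip inside a phase.

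For the period, $\widetilde Q_2(t)=\tfrac{B}{2}+\int_0^t(\beta u-C)\,du$ returns to $B/2$ exactly at $t=2C/\beta$, and symmetrically phase $A_1^{ON}$ lasts $2C/\alpha$, so $T=2C\left(\tfrac1\alpha+\tfrac1\beta\right)$. It then remains to glue the two phases: imposing continuity of the state at the two switches and periodicity over $T$ reduces the whole problem to a one-dimensional return map on the rate at which the starved flow enters its phase; substituting the closed forms above and iterating over one period produces the self-referential relation whose solution is exactly the \emph{limiting composition} $\widetilde f$ (resp.\ $\widetilde g$), i.e.\ the meaning of $\widetilde f=f\circ f\circ\cdots$ in the statement. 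I would establish existence and uniqueness of this fixed point by verifying that the return map is a strict contraction --- it is the composition of a linear increment with the decreasing, concave Bernoulli map $x\mapsto 2\sqrt{\beta}\,x/(2\sqrt{\beta}+\kappa x)$ for a constant $\kappa>0$ read off the phase endpoints, whose derivative at the fixed point is strictly below $1$ --- and the same estimate yields global attractivity. Combined with the first paragraph (every trajectory enters the saturated regime and is thereafter governed by this map), this shows the displayed limit cycle is the unique stationary solution.

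The main obstacle is precisely this last step: writing the composed return map explicitly enough to certify the contraction, and checking that the regime hypotheses of the first two paragraphs --- in particular $0\le Q_k\le B$ and the longest queue not switching mid-phase --- persist on the fixed-point trajectory; this is where the large-buffer assumption is genuinely used and where the algebra is heaviest. A secondary subtlety is to pin down the state exactly at the switching instants (which, if any, of its components jump there), so that the boundary matching is consistent with the closed forms $f$, $g$.
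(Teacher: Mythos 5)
Your outline reproduces the paper's proof almost step for step: the reduction to the saturated regime (the paper's steps 1--2, with Appendix 1 showing $A(t)\ge C$ for $t>t_s$ so that $dQ/dt=0$), the observation that $Q_1(t)=B-Q_2(t)$ forces the queues to meet at $B/2$ where service switches, the linear growth of the served rate, the Bernoulli/error-function integration of the starved rate yielding $f$ and $g$, and the reduction of uniqueness of the cycle to the behaviour of the rate at which the starved flow re-enters service. The one substantive divergence is in the gluing step. The paper does not certify a contraction of an abstract return map: it writes the phase-entry rates as explicit sequences $\{\epsilon_i\}$, $\{\gamma_i\}$ with $\epsilon_{i+1}=2\sqrt{\beta}\,\epsilon_i f(\epsilon_i+\alpha\tau_i,\gamma_i,\tau'_i)$, bounds them by $\epsilon_{i+1}\le(\epsilon_i+2C)K$ with $K=2\sqrt{\beta}/\bigl(2\sqrt{\beta}+2Ce^{C^2/2\beta}\sqrt{2\pi}\,Erf(C/\sqrt{2\beta})\bigr)$, and concludes they tend to zero because the factor $e^{C^2/2\beta}$ makes $K$ negligible (Appendix 2). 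This is the point your plan, as stated, would miss: a generic contraction argument gives you \emph{a} unique fixed point of the return map but does not tell you that this fixed point is (essentially) zero, and it is precisely the zero entry rate that makes $\widetilde{A}_2(t)=\beta t$ start from the origin, fixes the phase durations at exactly $2C/\beta$ and $2C/\alpha$, and hence gives $T=2C(1/\alpha+1/\beta)$. So to complete your step you would still need the quantitative estimate on $f$ (the huge denominator) that the paper uses; once you have it, the contraction framing and the paper's direct iteration are interchangeable. Your self-consistency checks (large-buffer condition, no mid-phase flip of the longest queue) are reasonable additions that the paper handles only implicitly.
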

\begin{proof}
The proof is structured into three steps: \textbf{1)} we  study the transient regime
and analyze how the system reaches the first \textit{saturation point},
i.e. the time instant denoted by $t_s$ at which total queue occupancy $Q$ attains saturation ($Q=B$),
$$\label{def:t_sat}
 t_s \triangleq  \inf\left \{ t>0, Q(t)=B \right \}.
$$
\textbf{2)} We show that once reached this state, the system does not leave the \textit{saturation regime}, where the buffer remains full.
\textbf{3)} Under the condition $Q(t)=B, \quad  \forall t>t_s$, we characterize the cyclic evolution of virtual queues and rate and eventually
show the existence of a unique limit-cycle in steady state whose characterization is provided analytically.\\
\textbf{1)} With no loss of generality, assume  the system starts with $Q$ empty at $t=0$, i.e. $Q_1(0)=Q_2(0)=0$ and initial rates, $A_1(0)=A_1^0$,$A_2(0)=A_2^0$.
At $t=0$, TCP rates, $A_1$, $A_2$ start increasing  according to (\ref{eqn:ODEsource2}) and the virtual queues are simultaneously served at rate $A_1(t)$, $A_2(t)$ until it holds $A(t)<C$.
Hence, the queue $Q$ remains empty until the input rate to the buffer, \\
$$A(t)=A_1(t)+A_2(t)=A_1^0+A_2^0+(\alpha+\beta)t$$
exceeds the link capacity $C$, in $t_0=(C-A_1^0-A_2^0)/(\alpha+\beta)$.
From this point on, the total queue starts filling in and TCP rates keep increase in time (with $A(t)>C$, $\forall t>t_0$.
The system faces no packet losses until the first saturation point.
Departing from a non-empty queue clearly accelerates the attainment of buffer saturation, but it is always true that
at $t_s$, the buffer is full ($Q(t_s)=B$) and $A_1(t_s)+A_2(t_s)>C)$.
In general one can state that independently of the state of the system at $t=0$, the rate increase in absence of packet losses guarantees that the system reaches in a finite time  the total queue saturation. When this happens, at $t=t_s$, the total input rate is greater than the output link capacity, i.e. $A(t_s)>C$, which is equivalent to say that $dQ_t/dt|_{t=t_s}>0$.
\\\textbf{2)} Once reached the saturation, the auxiliary result in appendix \ref{app:1} proves that the system remains in the \textit{saturation regime}, that is $Q(t)=B$, $\forall t>t_s$.
\begin{obs}
 One can argue that in practice the condition of full buffer is only verified within limited time intervals, and that the queue occupancy goes periodically under $B$, when the buffer is emptying. Actually, $ns2$ simulations and experimental tests confirm that the queue occupancy can instantaneously be smaller than $B$, but overall the  average queue occupancy is approximately equal to $B$. Indeed,
the difference w.r.t. a Drop Tail/FIFO queue is that SQF with LQD  mechanism eliminates the loss synchronization induced by drop tail under FIFO scheduling discipline, so that in our setting when a flow experiments packet losses, the other one still increases and feeds the buffer.
\end{obs}
\textbf{3)}
Suppose $Q_1$ is the longest queue at $t_s$ (the dual case is completely symmetric).
The rate $A_1$ sees a multiplicative decrease according to (\ref{eqn:ODEsource2})
as $Q(t_s)=B$ and $Q_1$ is the longest virtual queue. By solving (\ref{eqn:ODEsource2}), the resulting rate decay of $A_1$ is $\forall t>t_s$
\begin{align}\label{A1*_after}
A_1(t)=2\sqrt{\beta}A_1(t_s)f(A_1(t_s),A_2(t_s),t-t_s),
\end{align}
where $f(\cdot)$ is defined in (\ref{def:f}).
At the same time, $A_2$ keeps increasing linearly.
In terms of queue occupation, since we proved that $Q(t)=B$ from $t_s$ on, it means that the decrease
due to packet drops in $Q_1$  is compensated by
the increase of $Q_2$.
In fact, 
no matter the values of $Q_1$ and $Q_2$ in $t_s$, the rate at which $Q_1$ decreases is a function of the rate at which $Q_2$ increases.
Indeed, $\forall t>t_s$,
\begin{align*}
 Q_1(t)&= Q_1(t_s)+\int_{t_s}^{t} A_1(v)dv-\int_{t_s}^{t} (A_1(v)+A_2(v)-C)dv \nonumber\\
\vspace{-1mm}  &=B-Q_2(t_s)-\int_{t_s}^{t} (A_2(v)-C)dv =B- Q_2(t).
\end{align*}
It follows that the two queues become equal when $Q_1=Q_2=\frac{B}{2}$.
We can denote by $t_1=t_s+\tau_1$ the first \textit{queue meeting point},
\begin{align}\label{def:t1}
t_1 &\triangleq  \inf\left \{ t>t_s, \int_{t_s}^{t} (A_2(v)-C) dv=\frac{B}{2} \right \}.
\end{align}
The state of the system in $t_1$ is,
\begin{align}\label{ic_t1}
A_1(t_1)= \epsilon_1, &~ A_2(t_1)= \beta t_1, & Q_1(t_1)=Q_2(t_1)=\frac{B}{2}.
\end{align}
where we defined $\epsilon_1>0$, the value $A_1(t_1)$ given by (\ref{A1*_after}), which we recall is a function of $A_1(t_s)$.
From $t_1$ on, the two flows and thus the two virtual queues invert their roles: $Q_1$ enters in service, whereas $Q_2$ suffers from packet losses and, consequently, $A_2$ observes a multiplicative rate decrease, while $A_1$ increases linearly.
Since $L(t)>0$, $\forall t>t_1$,
\begin{align*}
Q_1(t)&= Q_1(t_1)+\int_{t_1}^{t} (A_1(u)-C) \\
&=\frac{B}{2}+\int_{t_1}^{t} (\epsilon_1 +\alpha (v-t_1)-C) dv=B- Q_2(t). \nonumber
\end{align*}
and $Q_1(t'_1)=Q_2(t'_1)=B/2$, if we denote by $t'_1=t_1+\tau'_1$ the next queue meeting point,
\begin{align}\label{def:t1primo}
t'_1\triangleq &\inf\left\{t>t_1, \frac{B}{2}+\int_{t_1}^{t'_1} \left( \epsilon_1+\alpha(v-t_1)-C \right) du=\frac{B}{2} \right \} 
\end{align}
The state of the system in $t'_1$ is,
\begin{align*}
&A_1(t'_1)=\epsilon_1+\alpha \tau'_1,&A_2(t'_1)= \gamma_1, &~ Q_1(t'_1)=Q_2(t'_1)=\frac{B}{2}.
\end{align*}
where $\gamma_1=A_2(t'_1)=2\sqrt{\alpha}A_2(t_1)g(A_2(t_1),A_1(t_1),t'_1-t_1)$ ($\gamma_1$ is a function of $A_2(t_s)$) and
$g(\cdot)$ is defined in (\ref{def:g}).
\begin{figure}[htb]
\begin{center}
\includegraphics[width=0.8\textwidth,height=0.35\textwidth]{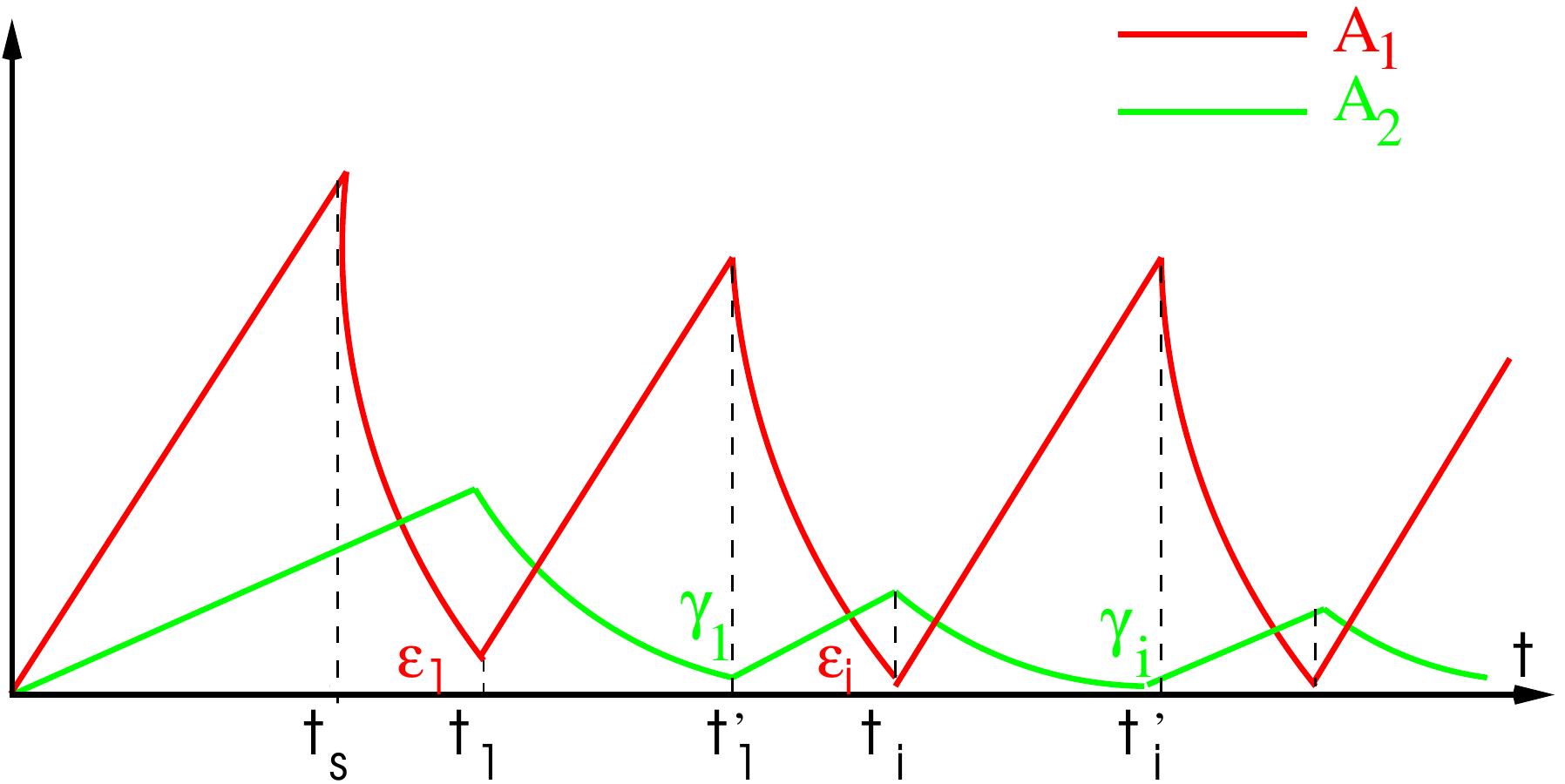}
\end{center}
\caption{Cyclic time evolution of TCP rates under SQF scheduling.}
 \label{fig:rates-cycle}
\end{figure}
\\Iteratively, one can construct a cycle for the virtual queue $Q_1$ ($Q_2$)  composed by two phases:
\begin{itemize}
 \item \textit{phase $A_2^{ON}$ }: when $Q_1$ ($Q_2$) goes from $B/2$ to $B/2$ remaining smaller (greater) than $B/2$ at a rate $C-A_2$ (respectively $A_2-C$);
 \item \textit{phase $A_1^{ON}$}:  when $Q_1$ ($Q_2$) goes from $B/2$ to $B/2$ remaining greater (smaller) than $B/2$ at a rate $C-A_1$ (respectively $A_1-C$).
\end{itemize}
The duration of these phases depends on the values of $A_2$ and $A_1$ respectively at the beginning of phase $A_1^{ON}$ and phase $A_2^{ON}$.
As in Fig.\ref{fig:rates-cycle}, denote by $\epsilon_i$ and $\gamma_i$ the initial values of $A_1$ and $A_2$ at the beginning of the $i^{th}$ phase $A_1^{ON}$ and $A_2^{ON}$ respectively.
Thanks to the auxiliary result $2$ in appendix \ref{app:2}, we prove that in steady state such initial values are zero and easily conclude the proof.
Let us now provide analytical expressions for the mean stationary values of sending rate and throughput. As a consequence, one can also compute the mean values of the stationary queues,
$\widetilde{Q}_1$, $\widetilde{Q}_2$ (one smaller, the other greater than $B/2$).
\end{proof}
\begin{cor}\label{cor:SQF}
The \textit{mean sending rates} in steady state are given by:
$$ \overline{A}_1\approx\left(\frac{\beta}{\alpha+\beta}C+\frac{\alpha\beta}{2C(\alpha+\beta)}\log\left(1+\frac{2\sqrt{2\pi}}{\sqrt{\beta}}C e^{\frac{C^2}{2\beta}}Erf\left(\frac{C}{\sqrt{2\beta}}\right)\right)\right)$$
$$\overline{A}_2\approx\left(\frac{\alpha}{\alpha+\beta}C+\frac{\alpha\beta}{2C(\alpha+\beta)}\log\left(\hspace{-1mm}1+\frac{2\sqrt{2\pi}}{\sqrt{\alpha}}C e^{\frac{C^2}{2\alpha}}Erf\left(\frac{C}{\sqrt{2\alpha}}\right)\right)\right)$$
The \textit{mean throughput values} in steady state are given by:
$$ \overline{X}_1=\frac{\beta}{\alpha+\beta}C,\quad \overline{X}_2=\frac{\alpha}{\alpha+\beta}C.$$
\end{cor}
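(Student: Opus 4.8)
The plan is to deduce Corollary~\ref{cor:SQF} from the limit-cycle description of the preceding Lemma by time-averaging each quantity over one period $T=2C(1/\alpha+1/\beta)$, splitting every integral into the two phases $A_1^{ON}$ (of length $2C/\alpha$) and $A_2^{ON}$ (of length $2C/\beta$). The organizing identity is the conservation relation $\overline{A}_k=\overline{X}_k+\overline{L}_k$, obtained by integrating $dQ_k/dt=A_k-D_k-L_k$ over a full period: on the cycle the left-hand side vanishes, so the mean sending rate is the mean throughput plus the mean loss rate. It therefore suffices to evaluate $\overline{X}_k=\frac1T\int_0^T D_k$ and $\overline{L}_k=\frac1T\int_0^T L_k$.

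I would do the throughputs first, since they are exact. On the limit cycle the buffer stays full, so $D_1(t)+D_2(t)=C$ for all $t$, and SQF assigns the whole capacity to the momentarily shortest queue, which by the Lemma is $Q_1$ throughout phase $A_1^{ON}$ and $Q_2$ throughout phase $A_2^{ON}$. Hence $D_1\equiv C$ on a set of measure $2C/\alpha$ and $D_1\equiv 0$ elsewhere, so $\overline{X}_1=\frac{C\,(2C/\alpha)}{T}=\frac{\beta}{\alpha+\beta}C$, and by the same count $\overline{X}_2=\frac{\alpha}{\alpha+\beta}C$ (consistently $\overline{X}_1+\overline{X}_2=C$).

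For the loss rates, note that flow $1$ loses packets only during phase $A_2^{ON}$, where $Q_1$ is the unique longest queue and $A_1$ follows the decay~(\ref{def:f}), so that $L_1=A_1+\beta t-C$ equals $-(\log A_1)'$. The crucial simplification is that the phase length $2C/\beta$ is precisely the return time of the parabola $t\mapsto\int_0^t(\beta u-C)\,du$, whence $\int_0^{2C/\beta}(\beta t-C)\,dt=0$ and so $\int_0^{2C/\beta}L_1\,dt=\int_0^{2C/\beta}A_1\,dt$; combined with $L_1=-(\log A_1)'$ this telescopes to $\log\!\big(\widetilde{A}_1(0)/\widetilde{A}_1(2C/\beta)\big)=\log\!\big(2C/\widetilde{A}_1(2C/\beta)\big)$. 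Evaluating the closed form~(\ref{def:f}) at $t=2C/\beta$ — where the exponential prefactor collapses to $1$ and the two $Erf$ terms combine by oddness — gives $\widetilde{A}_1(2C/\beta)=2C\big/\big(1+\tfrac{2\sqrt{2\pi}}{\sqrt{\beta}}\,C\,e^{C^2/(2\beta)}Erf(\tfrac{C}{\sqrt{2\beta}})\big)$, hence $\overline{L}_1=\frac1T\log\!\big(1+\tfrac{2\sqrt{2\pi}}{\sqrt{\beta}}C\,e^{C^2/(2\beta)}Erf(\tfrac{C}{\sqrt{2\beta}})\big)$. Adding $\overline{X}_1$ and inserting $T=\frac{2C(\alpha+\beta)}{\alpha\beta}$ produces the stated $\overline{A}_1$; $\overline{A}_2$ then follows by the $\alpha\leftrightarrow\beta$ symmetry, with $g$ of~(\ref{def:g}) in place of $f$. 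The same averaging applied to the parabolic queue trajectories yields the mean virtual queues $\widetilde{Q}_1=\tfrac B2+\tfrac{C^2(\alpha-\beta)}{3\alpha\beta}$ and $\widetilde{Q}_2=B-\widetilde{Q}_1$, i.e. one above and one below $B/2$, as announced at the end of the Lemma's proof.

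The points needing care — and the source of the ``$\approx$'' in the statement — are: (i) that $L_1=A_1+\beta t-C\ge 0$ throughout phase $A_2^{ON}$, so that losses are continuously present and $dQ/dt=0$ is sustained, which holds because the instant $A_1+\beta t-C$ would reach $0$ the loss term and hence $\dot A_1$ vanish, $A_1$ freezes, and $A_1+\beta t-C$ at once grows again at rate $\beta$; (ii) that both virtual queues stay inside $[0,B]$, which needs a mild lower bound on the buffer (roughly $B\gtrsim C^2/\beta$); and (iii) that the system already sits on the limit cycle, i.e. the transient regime analysed at the start of the Lemma's proof and the convergence of the $\widetilde f$/$\widetilde g$ compositions to their fixed points (appendices~\ref{app:1}--\ref{app:2}) have run their course, the whole computation being carried out with the simplified source equation~(\ref{eqn:ODEsource2}). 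Making these idealizations precise is the only real obstacle; granting them, the remaining manipulations are elementary integrations.
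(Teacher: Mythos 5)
Your proposal is correct and follows the same overall strategy as the paper's Appendix~\ref{app:3}: average each quantity over one period $T=2C(1/\alpha+1/\beta)$ of the limit cycle and split every integral into the phases $A_1^{ON}$ and $A_2^{ON}$; your throughput and mean-queue computations are the paper's verbatim. Where you genuinely differ is in how the logarithmic correction to $\overline{A}_1$ is obtained. The paper writes $\overline{A}_1=\frac1T\bigl(\int_0^{2C/\alpha}\alpha u\,du+\int f(2C,0,u)\,du\bigr)$ and simply asserts the value of the second integral; you instead invoke the period-conservation identity $\overline{A}_k=\overline{X}_k+\overline{L}_k$ (from integrating $\dot Q_k=A_k-D_k-L_k$ over the cycle) and evaluate $\overline{L}_1$ as a boundary term, using $\int_0^{2C/\beta}(\beta t-C)\,dt=0$ to get $\int L_1=\int A_1=\log\bigl(A_1(0)/A_1(2C/\beta)\bigr)$, so that only the endpoint value of $f$ is needed. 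The two decompositions agree term by term, since $\frac1T\int_0^{2C/\alpha}\alpha u\,du=\frac{2C^2/\alpha}{T}=\overline{X}_1$; your route is cleaner and in effect supplies the derivation of the integral that the paper omits. One point to flag: the identity $L_1=-(\log A_1)'$ you use is the one consistent with the Lemma's closed form (\ref{def:f}), which solves $\dot A_1=-A_1(A_1+\beta t-C)$, and it reproduces the stated coefficient $\frac{\alpha\beta}{2C(\alpha+\beta)}$; however, the source equation (\ref{eqn:ODEsource2}) as literally written carries the factor $-\frac{A_k}{2}L_k$ and would give $L_1=-2(\log A_1)'$, doubling the logarithmic term. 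This factor-of-two ambiguity is an inconsistency internal to the paper (between (\ref{eqn:ODEsource2}) and (\ref{def:f})) rather than a gap you introduced, but your write-up should state explicitly which normalization it adopts.
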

The mean virtual queues in steady state are given by:
$$ \overline{Q}_1=\frac{B}{2}+\frac{C^2(\alpha-\beta)}{3\alpha\beta}C,\quad \overline{Q}_2=\frac{B}{2}-\frac{C^2(\alpha-\beta)}{3\alpha\beta}C.$$
\begin{proof}
The proof is reported in appendix \ref{app:3}.
\end{proof}
\subsection{Longest Queue First (LQF) scheduling discipline}\label{sec:LQF}
\begin{lem}
 The dynamical system described by (\ref{eqn:ODEqueue})-(\ref{eqn:ODEsource2}) under LQF scheduling discipline admits as unique stationary solution,
\begin{align*}
&\overline{A}_1\approx\frac{\alpha}{\alpha+\beta}C, &\overline{A}_2\approx\frac{\beta}{\alpha+\beta}C, \nonumber \\
&\overline{X}_1=\frac{\alpha}{\alpha+\beta}C, &\overline{X}_2=\frac{\beta}{\alpha+\beta}C, \nonumber \\
&\overline{Q}_1=\overline{Q}_2=\frac{B}{2}.
\end{align*}
\end{lem}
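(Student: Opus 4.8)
The plan is to follow the same three–step skeleton as the SQF proof, the new feature being that LQF pushes the two virtual queues to a common value and then freezes them there, so the stationary regime is a genuine equilibrium instead of a limit cycle. Steps~1 and 2 are unchanged: from any initial state, since the rates only increase while there are no losses, the total queue reaches $B$ at a finite saturation time $t_s$ with $A(t_s)>C$; and by the reasoning of Step~2 of the previous proof (Appendix~\ref{app:1}) the system thereafter stays in the saturation regime $Q(t)=B$. So I would just invoke those and concentrate on Step~3.

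Step~3 splits into three sub-steps. (a) If $Q_1(t_s)\neq Q_2(t_s)$, say $Q_1>Q_2$, then LQF serves the unique longest queue only, $D_1=C$, $D_2=0$, and the loss definition~(\ref{def:L}) gives $L_1=(A-C)^+=A-C$, $L_2=0$; flow~$2$ is then out of service and lossless, so by~(\ref{eqn:ODEsource2}) its rate stays constant, $\dot Q_2=A_2(t_s)>0$ and $Q_1=B-Q_2$, whence the queues meet at $B/2$ after the finite time $(B/2-Q_2(t_s))/A_2(t_s)$ (the Riccati evolution of $A_1$ on this interval is not needed). (b) On $\{Q_1=Q_2=B/2\}$ both queues are longest, LQF splits the service proportionally, $D_k=CA_k/(A_1+A_2)$, and the losses are $L_k=(A_k-D_k)^+=A_k(A_1+A_2-C)/(A_1+A_2)$ (using $A_1+A_2>C$); a one-line substitution in~(\ref{eqn:ODEqueuek}) gives $\dot Q_1=\dot Q_2=0$, so the queues remain frozen at $B/2$ — which already establishes $\overline Q_1=\overline Q_2=B/2$. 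Checking the two one-sided vector fields shows the gap $|Q_1-Q_2|$ has negative derivative on both sides of the manifold, so the manifold is attracting and the proportional-split dynamics is the (Filippov) solution on it. (c) On that manifold the sending rates obey the planar autonomous system
\[
\dot A_k=\frac{A_k}{A_1+A_2}\Bigl(\tfrac{1}{R_k^2}-\tfrac{A_k(A_1+A_2-C)}{2}\Bigr),\qquad k=1,2 .
\]
I would show that its only equilibrium in the open positive quadrant is characterised by $A_1^*/A_2^*=\alpha/\beta$ together with $s^*:=A_1^*+A_2^*$ equal to the positive root of $s^2-Cs-2(\alpha+\beta)=0$, and that it is globally asymptotically stable there. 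For the global part I would note that forward orbits stay in the open positive quadrant and are bounded (the axes $A_k=0$ are repelling invariant lines and $\dot A_k<0$ once $A_k$ is large), that $s(t)$ can never decrease through $C$ since $\dot s|_{s=C}=(\alpha A_1+\beta A_2)/C>0$ (which simultaneously keeps the whole analysis consistently inside the saturation regime, so $(A-C)^+=A-C$ and $(A_k-D_k)^+=A_k-D_k$ throughout), and that the system is competitive, $\partial\dot A_1/\partial A_2<0$ and $\partial\dot A_2/\partial A_1<0$; hence by the theory of planar competitive systems it has no nonconstant periodic orbit, and Poincar\'e--Bendixson forces every bounded forward orbit — in particular the one handed over from Step~3(a) — to converge to $(A_1^*,A_2^*)$. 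Finally, since $s^*=C+O((\alpha+\beta)/C)\approx C$ whenever $\alpha+\beta\ll C^2$, one gets $\overline A_k=(1/R_k^2)\,s^*/(\alpha+\beta)\approx(1/R_k^2)\,C/(\alpha+\beta)$, while the throughputs come out exactly: $\overline X_k=D_k^*=CA_k^*/s^*=(1/R_k^2)\,C/(\alpha+\beta)$.

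The main obstacle is Step~3(c): establishing global convergence of the planar rate system — ruling out limit cycles and other limit behaviour — while keeping $A_1+A_2>C$ so the picture stays inside the saturation regime. The competitiveness of the vector field is what makes this tractable, and it also removes any concern about the basin of attraction, since the basin is then the whole positive quadrant and thus automatically contains the state produced by the transient phases.
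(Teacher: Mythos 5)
Your proof is correct and reaches exactly the paper's answer ($s^*$ the positive root of $s^2-Cs-2(\alpha+\beta)=0$, $A_k^*=\tfrac{\alpha_k}{\alpha+\beta}s^*$, $\overline X_k=CA_k^*/s^*$, queues frozen at $B/2$), and its skeleton (transient to saturation, persistence of $Q(t)=B$ via Appendix~\ref{app:1}, equalization of the virtual queues, decoupled rate dynamics) is the paper's. Where you genuinely diverge is the last step: the paper merely solves the algebraic fixed-point condition of (\ref{eq:fixedpoint_A_LQF}) and asserts that this is \emph{the} unique stationary solution, whereas you actually prove convergence. Your key computation checks out:
\[
\frac{\partial \dot A_1}{\partial A_2}=-\frac{A_1}{A^2}\Bigl(\alpha+\frac{C A_1}{2}\Bigr)<0,
\]
and symmetrically for $\partial\dot A_2/\partial A_1$, so the planar system is indeed competitive, admits no closed orbits, and Poincar\'e--Bendixson plus boundedness yields global convergence on the positive quadrant; the remark that $\dot s|_{s=C}=(\alpha A_1+\beta A_2)/C>0$ keeps the trajectory consistently inside the saturation regime, a point the paper leaves implicit. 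Likewise your Filippov check that $|Q_1-Q_2|$ decreases on both sides of the manifold $Q_1=Q_2=B/2$ makes rigorous what the paper only states in words after (\ref{eq:fixedpoint_Q_LQF}). One further difference worth noting: in your step 3(a) the out-of-service flow has $D_2=0$, hence a zero increase term in (\ref{eqn:ODEsource2}) and a constant $A_2$ during the pre-meeting transient; this is actually more faithful to the model than the paper's claim that ``$A_2$ increases linearly'' and its reuse of the SQF decay formula (\ref{A1*_after_LQF}), and since $A_2(t_s)>0$ the queues still meet in finite time, so the stationary conclusion is unaffected either way. In short, the paper's proof is shorter but only identifies the candidate equilibrium; yours is what actually justifies the word ``unique stationary solution'' in the lemma.
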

\begin{proof}
As in the case of SQF, no matter the initial condition the system reaches the first saturation point,
$ t_s \triangleq  \inf\left \{ t>0, Q(t)=B \right \}$
with $A(t_s)>C$ and the buffer enters the \textit{saturation regime}, that is $Q(t)=B$, $\forall t>t_s$.
As an example, we can consider the case of initial condition $A_1^0,A_2^0$, $Q_1^0=Q_2^0=0$.
At $t=0$, $A_1,A_2$ start increasing as for SQF while $Q$ remains empty
until the input rate to the buffer, $$A(t)=A_1^0+A_2^0+(\alpha+\beta)t$$
exceeds the link capacity $C$, in $t_0=(C-A_1^0-A_2^0)/(\alpha+\beta)$.
From this point on, the total queue starts filling in and TCP rates keep increasing in time according to (\ref{eqn:ODEsource2}) with the only difference, w.r.t. the case of SQF, that
$Q_1>Q_2$ is the longest queue served until buffer saturation and $Q_2$ only gets the remaining service capacity, when there is one.
The system faces no packet losses and $Q_1$ remains in service until  $t_s$.\\
At $t_s$,
suppose with no loss of generality (the other case is symmetrical) that $Q_1$ is the longest virtual queue. The rate of $TCP1$, $A_1$ sees a multiplicative decrease according to (\ref{eqn:ODEsource2}).
By solving (\ref{eqn:ODEsource2}), the resulting rate decay of $A_1$ is $\forall t>t_s$
\begin{align}\label{A1*_after_LQF}
&A_1(t)=2\sqrt{\beta}A_1(t_s)f(A_1(t_s),A_2(t_s),t-t_s),
\end{align}
where $f(\cdot)$ is defined in (\ref{def:f}).
At the same time, $A_2$ increases linearly.
Once proved that $Q(t)=B$, $\forall t>t_s$ (see Appendix \ref{app:1}), no matter the values of $Q_1$ and $Q_2$ in $t_s$, it holds
 $Q_1(t)
        =B- Q_2(t).$
Thus, the two queues meet when $Q_1=Q_2=\frac{B}{2}$, at
what we can denote by $t_1=t_s+\tau_1$, the first \textit{queue meeting point}.
Once the queues reach the state $Q_1=Q_2=B/2$, they cannot exit this state. Indeed, if we rewrite (\ref{eqn:ODEqueuek}), it gives:
\begin{align}\label{eq:fixedpoint_Q_LQF}
&\frac{dQ_k(t)}{dt}=A_k(t)-\frac{A_k(t)}{A(t)}C-\frac{A_k(t)}{A(t)}(A(t)-C)=0
\end{align}
with $k=1,2$ independently of the rate values.
It implies that,
$\label{mean_queues_LQF}\overline{Q}_1=\overline{Q}_2=\frac{B}{2}.$
This can be explained by the following consideration: in presence of two longest queues $Q_1(t)=Q_2(t)=B/2$, both queues are served proportionally
to the input rates $A_1(t)$ and $A_2(t)$, and loose packets in the same proportion, so preventing each queue to become greater (smaller) than the other.\\
About rates evolution, we now have two ODEs not anymore coupled with the queue evolution,
\begin{align}\label{eq:fixedpoint_A_LQF}
&\frac{dA_k(t)}{dt}=\frac{1}{R_k^2}\frac{A_k(t)}{A(t)}-\frac{1}{2}A_k\frac{A_k(t)}{A(t)}(A(t)-C)=0 \quad k=1,2
\end{align}
where we recall that we proved in appendix \ref{app:1} that $A(t)>C$, $\forall t>t_s$.
The system of ODEs can be easily solved and it admits one stationary solution:
\begin{align}\label{mean_rates_LQF}
\overline{A}_1&=\frac{\alpha}{\alpha+\beta}\frac{C}{2}\left(1+\sqrt{1+\frac{8(\alpha+\beta)}{C^2}}\right);
~\overline{A}_2=\frac{\beta}{\alpha}\overline{A}_1
\end{align}
In order to compute the mean throughput values in steady state, it suffices to observe
that $\overline{X}_k=C(\overline{A}_k/\overline{A})$, $k=1,2$,
which concludes the proof. Note that in practice, $8(\alpha+\beta)<<C^2$, that implies $\overline{A}_k\approx\overline{X}_k$, $k=1,2$.
\end{proof}
\subsection{Fair Queuing (FQ) scheduling discipline}\label{sec:FQ}
\begin{lem}
 The dynamical system described by (\ref{eqn:ODEqueue})-(\ref{eqn:ODEsource2}) under FQ scheduling discipline admits a unique stationary solution,
\begin{align*}
\overline{A}_1&=\frac{C}{4}\left(1+\sqrt{1+\frac{4\alpha}{C^2}}\right);
~~\overline{A}_2=\frac{C}{4}\left(1+\sqrt{1+\frac{4\beta}{C^2}}\right); \nonumber \\
\overline{X}_1&=\overline{X}_2=\frac{C}{2};
~~\overline{Q}_1=\overline{Q}_2=\frac{B}{2}.
\end{align*}
\end{lem}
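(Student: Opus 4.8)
The plan is to mirror the proof of the LQF lemma, with three steps preceded by a short reduction.

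\textbf{Step 1 (transient, saturation, and the bottleneck reduction).} Starting from an arbitrary state, as long as no flow loses packets both rates increase according to (\ref{eqn:ODEsource2}), so in finite time the total arrival rate exceeds $C$, the total queue fills, and the first saturation instant $t_s=\inf\{t>0:Q(t)=B\}$ is reached with $A(t_s)>C$, i.e.\ $dQ/dt|_{t_s}>0$. By the auxiliary result of Appendix~\ref{app:1} the system then stays in the saturation regime, $Q(t)=B$ for all $t>t_s$, exactly as argued for SQF and LQF. Next I would observe that in any stationary regime \emph{both} flows must be bottlenecked, $\overline A_k\ge C/N=C/2$: a flow with $A_k<C/2$ receives full service ($D_k=A_k$ under FQ), hence builds no queue and, not being the longest, suffers no LQD loss, so that (\ref{eqn:ODEsource2}) forces $dA_k/dt=\frac{1}{R_k^2}\frac{A_k}{C}>0$ — incompatible with stationarity.

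\textbf{Step 2 (virtual queues).} On the bottlenecked set FQ assigns $D_1=D_2=C/2$, so (\ref{eqn:ODEqueuek}) reads $dQ_k/dt=A_k-C/2-L_k$. When $Q_1\neq Q_2$ only the strictly longer virtual queue drops packets, at rate $(A-C)^+$; if, say, $Q_1>Q_2$ then $dQ_1/dt=C/2-A_2$ and $dQ_2/dt=A_2-C/2$, so $d(Q_1-Q_2)/dt=C-2A_2<0$ and the gap closes in finite time, its derivative being bounded away from $0$ while $A_2$ is bounded away from $C/2$ on the bottlenecked set. When $Q_1=Q_2$ both queues are (tied) longest, so the ``otherwise'' branch of (\ref{def:L}) gives $L_k=(A_k-D_k)^+=A_k-C/2$ and $dQ_k/dt=0$: the balanced configuration is invariant. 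Combined with $Q_1+Q_2=Q=B$ this yields $\overline Q_1=\overline Q_2=B/2$, and since the throughput equals the departure rate, $\overline X_1=\overline X_2=C/2$.

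\textbf{Step 3 (rates).} On the invariant set $\{Q_1=Q_2=B/2,\ A_1,A_2\ge C/2\}$ we have $D_k=C/2$ and $L_k=A_k-C/2$, so (\ref{eqn:ODEsource2}) \emph{decouples} into two scalar autonomous ODEs
\[
\frac{dA_k(t)}{dt}=\frac{1}{2R_k^2}-\frac{A_k(t)}{2}\left(A_k(t)-\frac{C}{2}\right),\qquad k=1,2 .
\]
Each right-hand side is a downward parabola in $A_k$, strictly positive at $A_k=0$ with a single positive zero, which is therefore a globally attracting equilibrium on $(0,\infty)$; moreover that zero exceeds $C/2$, so the bottlenecked regime is self-consistent. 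Solving $A_k^2-\frac{C}{2}A_k-\frac{1}{R_k^2}=0$ with $\frac1{R_1^2}=\alpha$, $\frac1{R_2^2}=\beta$ gives the announced $\overline A_1,\overline A_2$, and uniqueness of the whole stationary solution follows.

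\textbf{Main obstacle.} The delicate point is Step~2, namely showing that from an arbitrary post-saturation configuration the virtual queues actually reach the balanced set: one must control the intermediate phases in which one flow is not yet bottlenecked (so $D_k\neq C/2$ and the rate equations are still coupled to the queue dynamics), and justify the sliding-mode/``chattering'' reading of the trajectory on the switching surface $Q_1=Q_2$, which is precisely where the ``otherwise'' clause of (\ref{def:L}) becomes active. Once invariance and attractivity of the balanced set are secured, the rest is the elementary phase-line analysis of two decoupled scalar ODEs carried out in Step~3, just as in the LQF case.
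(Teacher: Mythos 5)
Your proof follows essentially the same route as the paper's: reach buffer saturation, show the two virtual queues equalize at $B/2$ and that this balanced state is invariant (via $dQ_k/dt=A_k-\tfrac{C}{2}-(A_k-\tfrac{C}{2})=0$), then solve the two decoupled rate ODEs $\tfrac{1}{2R_k^2}=\tfrac{A_k}{2}(A_k-\tfrac{C}{2})$ for the unique positive fixed point. The extra care you take---arguing both flows must be bottlenecked, invoking the tied-queue branch of (\ref{def:L}) on the surface $Q_1=Q_2$, and checking $\overline{A}_k>C/2$ for self-consistency---only tightens steps the paper treats informally (note, though, that the quadratic $A_k^2-\tfrac{C}{2}A_k-\alpha=0$ actually yields $\tfrac{C}{4}\left(1+\sqrt{1+16\alpha/C^2}\right)$, a harmless factor-of-four discrepancy with the stated formula that you inherit from the paper itself).
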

\begin{proof}
As for SQF or LQF, no matter the initial condition the system reaches the first saturation point,
$t_s \triangleq  \inf\left \{ t>0, Q(t)=B \right \}$
with $A(t_s)>C$
and the buffer enters  the \textit{saturation regime}, that is $Q(t)=B$, $\forall t>t_s$.
The evolution of the system until $t_s$ differs from that of $SQF$ or $LQF$,
as the two flows are allocated the fair rate until the virtual queues start filling in (when $A(t)$ exceeds $C$) and then they are served at capacity $\frac{C}{2}$ independently of $\alpha$ and $\beta$.
At $t_s$, the system suffers from packet losses at rate $L(t)=A(t)-C$ and the virtual queues $Q_1$, $Q_2$ evolve towards the state
$Q_1(t)=Q_2(t)=B/2$, 
\begin{align*}
 Q_1(t)=& Q_1(t_s)+\int_{t_s}^{t} (A_1(v)-\frac{C}{2})dv-\int_{t_s}^{t} (A_1(v)+A_2(v)-C)dv \nonumber\\
        &=B-Q_2(t_s)-\int_{t_s}^{t} (A_2(v-t_s)-\frac{C}{2})dv \nonumber \\
        &=B- Q_2(t).
\end{align*}
Once the queues reach the state $Q_k=B/2$, $A_k>C/2$, $k=1,2$ (otherwise one of the virtual queue would be empty) and the queues cannot exit this state.
Indeed, if we rewrite the ODE (\ref{eqn:ODEqueuek}), it gives:
\begin{align}\label{eq:fixedpoint_Q_FQ}
&\frac{dQ_k(t)}{dt}=A_k(t)-\frac{C}{2}-\left(A_k(t)-\frac{C}{2}\right)=0
\end{align}
with $k=1,2$ independently of the rate values.
It implies that,
\begin{align}\label{mean_queues_FQ}
\overline{Q}_1=\overline{Q}_2=\frac{B}{2}.
\end{align}
About the rates evolution, we now have two ODEs not anymore coupled with the queue evolution,
\begin{align}\label{eq:fixedpoint_A_FQ}
&\frac{dA_k(t)}{dt}=\frac{1}{2 R_k^2}-\frac{A_k(t)}{2}\left(A_k(t)-\frac{C}{2}\right)=0, ~ k=1,2
\end{align}
The system of ODEs can be easily solved and it admits one stationary solution:
\begin{align}\label{mean_rates_FQ}
&\overline{A}_1=\frac{C}{4}\left(1+\sqrt{1+\frac{4\alpha}{C^2}}\right); &\overline{A}_2=\frac{C}{4}\left(1+\sqrt{1+\frac{4\beta}{C^2}}\right)
\end{align}
In order to compute the mean throughput values in steady state, it suffices to observe that virtual queues are not empty and $A_k>C/2$, $k=1,2$,
therefore $\overline{X}_k=C/2$,
 $k=1,2$. In addition,
note that, if $4\alpha<<C^2$, condition usually verified in practice, $A_k\approx X_k$, $k=1,2$.
\end{proof}




\section{Model Extensions}\label{sec:extensions}
\subsection{Extension to $N$ flows}
In Sec.\ref{sec:Analytical-results} we studied the case of $N=2$ TCP flows and analytically characterize
the stationary regime. For the case of $N>2$ flows,
one can generalize the analytical results obtained in the two-flows scenario.
More precisely, under the SQF scheduling discipline, the resulting steady state
solution of (\ref{eqn:ODEqueue})-(\ref{eqn:ODEsource2}) is a limit-cycle
composed by $N$ phases,
each one denoted as $A_k^{ON}$ phase (where flow $k$ is in service) and of duration $\frac{2C}{\alpha_k}$, with $\alpha_k=\frac{1}{R_k^2}$.
The mean stationary throughput associated to flow $k$ is:
\begin{align}\label{SQF_N_flows_X}
 \quad SQF: \qquad \overline{X}_k=\frac{1/\alpha_k}{\sum_{j=1}^{N} 1/\alpha_j}C
\end{align}
\\Under $LQF$ scheduling discipline, one gets a generalization of (\ref{mean_rates_LQF}), where the mean stationary throughput associated to flow $k$ is:\\
\begin{align}\label{LQF_N_flows_X}
 \quad LQF: \qquad \overline{X}_k=\frac{\alpha_k}{\sum_{j=1}^N \alpha_j}C
\end{align}
\\Similary, under FQ sheduling disciplines, it can be easily verified that there is still a fair allocation of the capacity $C$ among flows:\\
\begin{align}\label{FQ_N_flows_X}
 \quad FQ: \qquad \overline{X}_k=\frac{C}{N}
\end{align}
\subsection{Extension to UDP traffic}\label{sec:UDP}
In this section we extend the analysis to the case of rate uncontrolled sources (UDP)
competing with TCP traffic. We simply model a UDP source as a CBR flow
with constant rate $X^{UDP}$.
Consider the mixed scenario with one TCP source and one UDP source traversing a single bottleneck link
of capacity $C$. The system is described by a set of ODEs:
\begin{align}\label{ODE_TCP-UDP}
&\frac{dA_1(t)}{dt} =\alpha\left(\I{Q_t=0}+
\frac{D_1(t)}{C}\I{Q_t>0}\right)-\frac{1}{2}A_1(t)L_1(t);  \\
&\frac{dA_2(t)}{dt}=0;\quad \frac{dQ_k(t)}{dt}=A_k(t)-D_k(t)-L_k(t), \quad k=1,2.
\nonumber
\end{align}
with $A_1$ the sending rate of the TCP flow, and $A_2(t)=A_2(0)=X^{UDP}\le C$ the sending rate of the UDP flow.
The definition of $L_k(t)$ is the same as in (\ref{def:L}).\\
The interesting metric to observe in the mixed TCP/UDP	 scenario is the loss rate experienced by UDP in steady state, $\overline{L}^{UDP}=\overline{L_2}$,
which allows one to evaluate the performance of streaming applications against those carried by TCP.\\
Under SQF scheduling discipline, independently of the initial condition, the system reaches a first saturation point, $t_s$, where $Q(t_s)=B$. After $t_s$, the flow with the longest queue suffers from packet losses. Depending on the initial condition, it may be the UDP flow to first loose (if $Q_2(t_s)>Q_1(t_s)$) and in this case its virtual queue decreases until $t_1$, where $Q_1(t_1)=Q_2(t_1)=B/2$ before entering in service and finally emptying.
Indeed, for $t_s<t<t_1$,
$ Q_2(t)=Q_2(t_s)+\int_{t_s}^{t} X^{UDP}-(A_1(u)+X^{UDP}-C)du = B-Q_1(t),$
and for $t>t_1$,
$ Q_2(t)=\frac{B}{2}+\int_{t_1}^t (X^{UDP}-C) du.$
Whether it is TCP to loose in $t_s$, virtual queues both decrease until $Q_2$ empties, i.e. $\forall t>t_s$,
$Q_2(t)=Q_2(t_s)+\int_{t_s}^t (X^{UDP}-C) du.$
Once emptied, $Q_2$ remains equal to zero, since it is fed at rate $X^{UDP}\leq C$.
The remaining service capacity $C-X^{UDP}$ is allocated to the TCP flow, that behaves as a TCP flow in isolation on a bottleneck link of reduced capacity, equal to $C-X^{UDP}$.
Thus, the loss rate of UDP is zero in steady state,
\begin{align}\label{Loss_rate_UDP_SQF}
 \quad SQF: \quad\overline{L}^{UDP}=0.
\end{align}
When SQF is replaced by LQF, the outcome is the opposite. In fact, once the system reaches the first buffer saturation in $t_s$, one can easily observe that virtual queues
tend to equalize, either when TCP or UDP is the flow affected by packet losses. When $Q_1(t_1)=Q_2(t_1)=B/2$ in $t=t_1$, virtual queues do not change anymore, since
$dQ_{max}(t)=A_1-C+X^{UDP}-(A_1+X^{UDP}-C)=0, \quad t>t_1$, with $Q_{max}$ equal to $Q_1$ or $Q_2$ depending on the initial condition, and the
smaller one equal to $B-Q_{max}$.
The first ODE in (\ref{ODE_TCP-UDP}) related to $A_1$ admits one stationary solution, that is,
$$\label{eq:A1-UDP}
\overline{A}_1=\frac{C-X^{UDP}}{2}\left( 1+\sqrt{1+\frac{8\alpha}{(C-X^{UDP})^2}}\right).$$
Hence, the loss rate of UDP results to be 
\begin{align}\label{Loss_rate_UDP_LQF-prem}
\overline{L}^{UDP}=C\frac{X^{UDP}}{X^{UDP}+\overline{A}_1}
\end{align}
By replacing the expression of $\overline{A}_1$ in (\ref{eq:A1-UDP}), we get
\begin{align*}
\overline{L}^{UDP}&>C\frac{X^{UDP}}{X^{UDP}+\frac{C-X^{UDP}}{2}\left(2+\sqrt{\frac{8\alpha}{(C-X^{UDP})^2}}\right)} \nonumber \\&>C\frac{X^{UDP}}{C+\sqrt{2\alpha}}
\end{align*}
and 
\begin{align*}
\overline{L}^{UDP}<&C\frac{X^{UDP}}{X^{UDP}+C-X^{UDP}} =X^{UDP}.
\end{align*}
It follows that:
\begin{align}\label{Loss_rate_UDP_LQF}
\quad LQF: \overline{L}^{UDP}\approx X^{UDP}
\end{align}
under the condition $2\alpha<<C^2$, usually verified in practice.
It implies that the UDP loss rate is almost equal to its sending rate, and so the UDP throughput
is almost zero, exactly the opposite result w.r.t. SQF.
Finally, under FQ as scheduling discipline, it is easy to observe that both flows tend to the \textit{fair rate},
$\overline{X}_k=\frac{C}{2}$, $k=1,2$
and the loss rate of UDP is, hence, given by the difference
\begin{align}\label{Loss_rate_UDP_FQ}
\quad FQ: \quad\overline{L}^{UDP}=\left(X^{UDP}-\frac{C}{2}\right)^+.
\end{align}

\section{Numerical results}\label{sec:numerics}
This section presents numerical evaluations of formulas obtained
in Sec.\ref{sec:Analytical-results} and Sec.\ref{sec:extensions}
and comparisons with packet level simulations.
We focus on TCP sending rates, TCP throughputs and UDP loss rates
for a selected scenario among a larger set.
We consider a $10$Mbps bottleneck link for three TCP flows with different
RTTs. Tab.\ref{tab:notation} reports the numerical values of the
comparison, showing a good agreement between model and $ns2$.
In $ns2$ the link is almost fully utilized, although the model predicts
100\% utilization, this mirroring the model condition $A(t)\ge C$ in steady state.
The largest deviation
is registered for the TCP sending rate in presence of SQF while
the throughput is correctly predicted. We recall that while the throughput
formula is exact, the sending rate formula is approximated.
\begin{table}
\centering
\begin{tabular}{|l|rr|rr|rr|}
\hline
& \multicolumn{2}{|c}{{FQ [Mbps]}} & \multicolumn{2}{|c}{{LQF [Mbps]}} &\multicolumn{2}{|c|}{{SQF [Mbps]}} \\
\hline
\hline
& $ns2$ & Model& $ns2$ & Model& $ns2$ & Model\\
$\overline{A}_1$  & 5.26 & 5.00 & 7.48 & 7.35 & 5.8 & 6.32 \\
$\overline{A}_2$  & 4.93 & 5.00 & 2.63 & 2.65 & 8.1 & 8.68 \\
$\overline{A}$    & 10.2 & 10.01 & 10.01 & 10.00 & 13.90 &  15.00\\
$\overline{X}_1$ & 5.12 & 5.00 & 7.37 & 7.35 & 2.90 & 2.65 \\
$\overline{X}_2$ & 4.79 & 5.00 & 2.47 & 2.65 & 7.08 & 7.35 \\
$\overline{X}$ & 9.91 & 10.00  & 9.85 & 10.00 & 9.99 & 10.00 \\
\hline
\end{tabular}
\caption{Numerical comparison between $ns2$ and the model: $C=10$Mbit/sec, $R_1=20ms$, $R_2=50ms$.}
\label{tab:numerics}
\end{table}
We also present, in Tab.\ref{tab:numerics2}, a scenario on the performance of TCP and UDP  sharing a common bottleneck
of $10$Mbps, where the stream rate of UDP $X_{UDP}$ is taken first smaller than the fair rate ($C/2$) and then
larger. Results highlight that UDP stalls under LQF, while
it gets prioritized under FQ as long as the stream rate is smaller than
the fair rate.
The possibility to implicitly differentiate traffic through FQ has been exploited in \cite{hpsr,ancs05} for backbone
and border routers where the fair rate is supposed to be very large. SQF appears to be much more effective
to fulfill this task as shown in Tab.\ref{tab:numerics2} (and through experimentation in \cite{ostallo}),
since it gives priority to UDP flows with any stream rate and allocates the rest of the bandwidth to TCP
(see Fig.\ref{fig:sqf-tcp-dup} on top, for the time evolution).
Hence, SQF turns out to be a promising per-flow scheduler to implement implicit service differentiation for UDP flows
with large stream rates.
\begin{table}
\centering
\begin{tabular}{|l|rr|rr|rr|}
\hline
& \multicolumn{2}{|c}{{FQ [Mbps]}} & \multicolumn{2}{|c}{{LQF [Mbps]}} &\multicolumn{2}{|c|}{{SQF [Mbps]}} \\
\hline
\hline
& $ns2$ & Model& $ns2$ & Model& $ns2$ & Model\\
& \multicolumn{6}{c|}{{UDP rate  $X^{UDP}=3Mbps<C/2$}}  \\
$\overline{L}^{UDP}$  & 0.01 & 0 &2.98  & 3 & 0.1 & 0 \\
$\overline{X}^{TCP}$  & 6.69 & 7 & 9.99& 10 & 6.97 & 7 \\
\hline
& \multicolumn{6}{c|}{{UDP rate  $X^{UDP}=7Mbps>C/2$}}  \\
$\overline{L}^{UDP}$  & 1.96 & 2 & 6.95 & 7 & 0.1 & 0 \\
$\overline{X}^{TCP}$  & 4.98 & 5 & 9.87 & 10 & 2.98 & 3 \\
\hline
\end{tabular}
\caption{Scenario TCP/UDP: $C=10$Mbit/sec, $R_1=20ms$.}
\label{tab:numerics2}
\end{table}

\begin{figure}[htbp]
\centering
\includegraphics[width=0.8\textwidth]{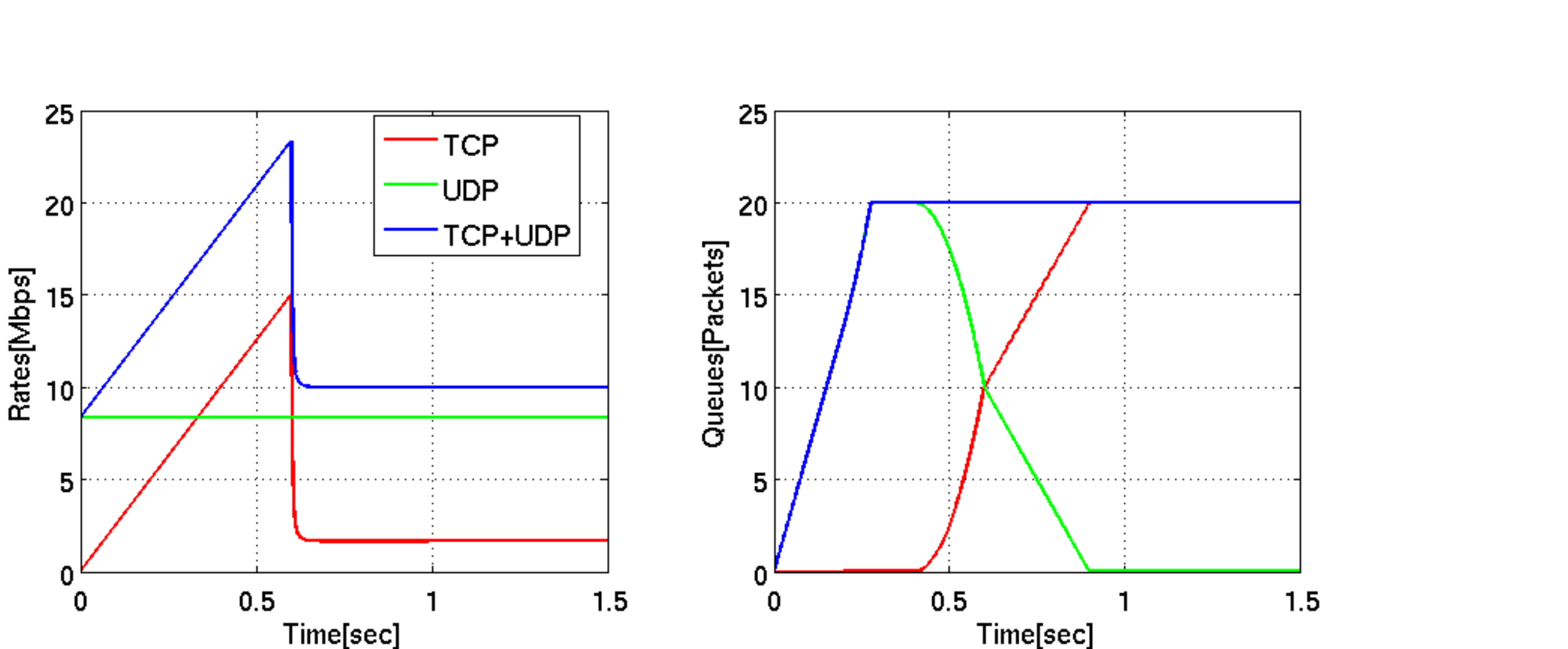}
\includegraphics[width=0.8\textwidth, height=0.25\textwidth]{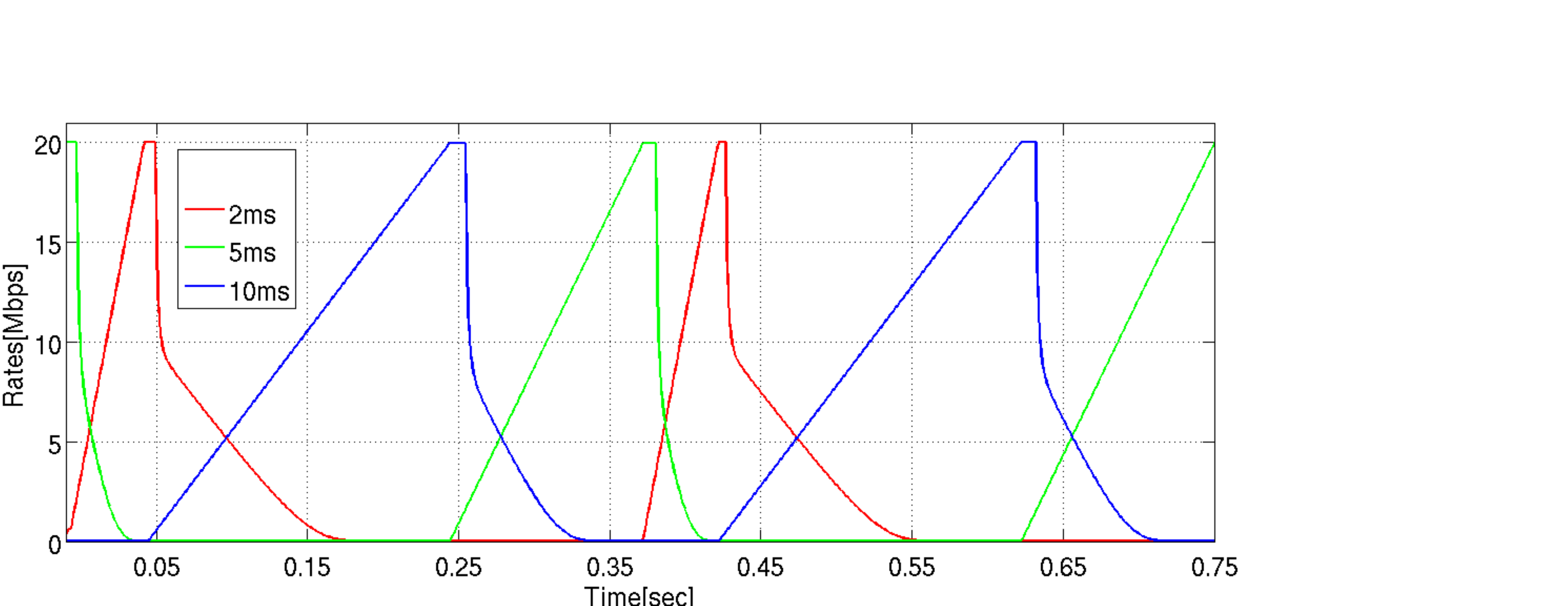}
\caption{Top plots: TCP and UDP under SQF scheduling. Bottom plot: 3 TCP flows under SQF with RTTs=2ms,5ms,10ms, B=30kB.}
\label{fig:sqf-tcp-dup}
\end{figure}
Short-term fairness in SQF can also be observed in Fig.\ref{fig:sqf-tcp-dup}, on the bottom,
where three TCP flows share a common bottleneck of $10$Mbps.
SQF allocates resources to TCP as a time division multiplexer with timeslots of variable size
proportional to RTTs. This is the reason why UDP traffic is prioritized
over TCP.
\section{Discussion and Conclusions}\label{sec:discussion}
The main contribution of this paper is given by the set of analytical
results gathered in Sec.\ref{sec:Analytical-results},\ref{sec:extensions}
that allow to capture and predict, in a relatively simple framework, the system evolution,
even in scenarios that would be difficult to study through simulation or experiments.
An additional, non marginal, contribution of the analysis is that we can now give a solid justification of many feasible
applications of such per-flow schedulers in today networks.\\
\textbf{\textit{FQ and LQF}.}
The absence of a limit-cycle for FQ and LQF in the stationary regime
explains the expected insensitivity of fairness w.r.t. the timescale.
In LQF, throughput is biased in favor of flows with small RTTs (the opposite
behavior is observed in SQF).
FQ, on the contrary, provides no biased allocations.
LQF has clearly applications in switching architectures to maximize global throughput. However,
the throughput maximization in switching fabrics in high speed routers
does not take into account the nature of Internet traffic, whose performance are closely related to TCP rate control.
As a consequence, flows that are bottlenecked elsewhere in the upstream path stall, and UDP streams
stall as well.\\
\textbf{\textit{SQF}.}
The analysis of SQF deserves particular attention. This per-flow scheduler has shown to have the attractive
property to implicitly differentiate streaming and data traffic, performing as a priority
scheduler for applications with low loss rate and delay constraints. Implicit service differentiation
is a great feature since it does not rely on the explicit knowledge of a specific application and preserves
network neutrality.
Recent literature on this subject has exploited FQ in order to achieve this goal
through the cross-protect concept (\hspace{-0.1mm}\cite{hpsr,ancs05}). Cross-protect gives priority to flows with
a rate below the fair rate and keeps the fair rate reasonably high through flow admission control
to limit the number of flows in progress. SQF achieves the same goal of cross-protect with no need of admission control as it differentiates
applications with rate much larger than the fair rate.
FQ is satisfactory on backbone links where the fair
rate can be assumed to be sufficiently high, whereas SQF has the ability to successfully replace it in access networks.
Furthermore,
SQF has a very different behavior from the other per-flow schedulers considered in this paper.
The instantaneous rate admits a limit cycle in the stationary regime, both for queues and rates,
in which flows are served in pseudo round-robin phases of duration proportional to the link capacity and to RTTs, so that long term throughput is biased in favor of flows with large RTTs.
Note that this oscillatory behavior, already observed for TCP under drop tail queue management and due to flow synchronization, here is due to the short term unfairness of SQF: the flow associated to the smallest queue is allocated
all the resources without suffering from  packet drops caused by congestion.
For instance, in LQF such phases do not exist because the flow with the longest queue is at the same time in service and penalized by
congestion packet drops.
Additional desired effects of SQF on TCP traffic, not accounted for by the model, derive
by the induced acceleration of the slow start phase (when a flow generates little queuing),
with the consequent global effect that small data transfers get prioritized on larger ones.
As a future work, we intend to investigate the effect of UDP burstiness, not captured by fluid models, on SQF implicit differentiation
capability, via a more realistic stochastic model.

\thebibliography{99}

\bibitem{ns2}
The Network Simulator - $ns2$,
http://www.isi.edu/nsnam/ns/

\bibitem{lartc}
Linux advanced routing and traffic control \url{http://lartc.org}

\bibitem{ajmone}
Ajmone Marsan M.;  Garetto M.; Giaccone P; Leonardi E;  Schiattarella E. and Tarello A.
Using partial differential equations to model TCP mice and elephants in large IP networks.
IEEE/ACM Transactions on Networking, vol. 13, no. 6, Dec. 2005, 1289-1301.

\bibitem{altman}
Altman E.; Barakat C. and Avratchenkov K.
TCP in presence of bursty losses.
In Proc. of ACM SIGMETRICS 2000.

\bibitem{baccelli}
Baccelli F. and Hong, D.
AIMD, Fairness and Fractal Scaling of TCP Traffic.	
In Proc of IEEE INFOCOM 2002.

\bibitem{carofiglioSplit}
Baccelli F.; Carofiglio G and Foss S.
Proxy Caching in Split TCP: Dynamics, Stability and Tail Asymptotics.
In Proc. of IEEE INFOCOM 2008.

\bibitem{sqf-demo}
Bonald T. and Muscariello L.
Shortest queue first: implicit service differentiation through per-flow scheduling.
Demo at IEEE LCN 2009.


\bibitem{giaccone}
Giaccone P.; Leonardi E. and Neri F.
On the behavior of optimal scheduling algorithms under TCP sources.
In Proc. of the International Zurich Seminar on Communications 2006.

\bibitem{multipathTCP}
Han H; Shakkottai S; Hollot C.V.; Srikant R; Towsley D.
Multi-path TCP: a joint congestion control and routing scheme to exploit path diversity in the internet.
IEEE/ACM Transaction on Networking vol. 14, no. 8, Dec. 2006, 1260-1271.

\bibitem{jain}
Jain R.; Chiu D.  and  Hawe W.
A Quantitative Measure Of Fairness And Discrimination For Resource Allocation In Shared Computer Systems.
DEC Research Report TR-301, Sep.1984

\bibitem{Kelly98}
Kelly F.; Maulloo A. and Tan D.
Rate control in communication networks: shadow prices, proportional fairness and stability.
Journal of the Operational Research Society 49 (1998) 237-252.

\bibitem{hpsr}
Kortebi A.;  Oueslati S. and  Roberts J.
Cross-protect: implicit service differentiation and admission control.
In Proc. of IEEE HPSR 2004.
	
\bibitem{korte04}
Kortebi A.; Muscariello L.; Oueslati S. and  Roberts J.
On the Scalability of Fair Queueing.
In Proc. of  ACM SIGCOMM  HotNets III, 2004.

\bibitem{korte05}
Kortebi A.; Muscariello L.; Oueslati S. and  Roberts J.
Evaluating the number of active flows in a scheduler realizing fair statistical bandwidth
sharing. In  Proc. of ACM SIGMETRICS  2005.

\bibitem{ancs05}
Kortebi A.; Muscariello L.; Oueslati S. and  Roberts J.
Minimizing the overhead in implementing flow-aware networking,
In Proc. of IEEE/ACM  ANCS 2005.

\bibitem{fred}
Lin D and Morris R.
Dynamics of Random early detection.
In Proc. of ACM SIGCOMM 97.

\bibitem{massoulie}
Massouli\'{e}, L. and Roberts, J.;
Bandwidth sharing: Objectives and algorithms,
IEEE/ACM Transactions on Networking, Vol. 10, no. 3, pp 320-328, June 2002.

\bibitem{nick09}
McKeown N. Software-defined Networking. Keynote Talk at IEEE INFOCOM 2009.
Available at \url{http://tiny-tera.stanford.edu/~nickm/talks/}

\bibitem{nick}
McKeown N.; Anantharam V and  Walrand J.  Achieving 100\% throughput in an input-queued switch.
In Proc. of IEEE INFOCOM 1996.

\bibitem{misra}
Misra V.; Gong W. and Towsley D.
Fluid-based analysis of a network of AQM routers supporting TCP flows with an application to RED.
In Proc. of ACM  SIGCOMM  2000.

\bibitem{umts}
Nasser N.;  Al-Manthari B.; Hassanein H.S.
A performance comparison of class-based scheduling algorithms in future UMTS access.
In Proc of IPCCC 2005.

\bibitem{ostallo}
Ostallo N. Service differentiation by means of packet scheduling.
Master Thesis, Institut Eur\'{e}com Sophia Antipolis, September 2008.
\url{http://perso.rd.francetelecom.fr/muscariello/master-ostallo.pdf}

\bibitem{padhye}
Padhye J.; Firoiu V.; Towsley D. and Kurose, J.
Modeling TCP Throughput: a Simple Model and its Empirical Validation.
In Proc. of ACM SIGCOMM 1998.

\bibitem{afd}
Pan R.; Breslau L.; Prabhakar B. and Shenker S.
Approximate fairness through differential dropping.
ACM SIGCOMM CCR 2003.

\bibitem{Wischik}
Raina G.; Towsley. D and  Wischik D.
Part II: Control theory for buffer sizing.
ACM SIGCOMM CCR 2005.

\bibitem{keslassyTCP}
Shpiner A and Keslassy I
Modeling the interaction of congestion control and switch scheduling.
In Proc of IEEE IWQOS 2009.

\bibitem{varghese}
Shreedhar M. and Varghese G.
Efficient fair queueing using deficit round robin.
In Proc. of ACM SIGCOMM 1995.

\bibitem{suter}
Suter B.; Lakshman T.V.; Stiliadis D.  and  Choudhury,A.K.
Design considerations for supporting tcp with per-flow queueing.
In Proc. of IEEE INFOCOM  1998.

\bibitem{tan}
Tan G. and Guttag J.
Time-based Fairness Improves Performance in Multi-rate Wireless LANs.
In Proc. of Usenix 2004.
\begin{appendix}
\section{Auxiliary result $1$}\label{app:1}
\begin{prop}
Given $Q(t_s)=B$ and $A(t_s)> C$, $\forall t>t_s$, the total queue remains full, i.e. $Q(t)=B$.
\end{prop}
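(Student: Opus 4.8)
The plan is to argue by a first-exit-time contradiction, exploiting the fact that the \textit{longest queue drop} rule makes $C$ a lower reflecting barrier for the aggregate rate $A(t)=\sum_k A_k(t)$ in the saturated regime, and that while $A(t)\ge C$ the buffer is pinned at $B$. I would first isolate two elementary facts. \textbf{(i)} Whenever $Q(t)=B$ and $A(t)\ge C$, the aggregate loss rate is exactly $L(t)=A(t)-C$, so that (\ref{eqn:ODEqueue}) gives $dQ/dt=A(t)-C-L(t)=0$. This is read off (\ref{def:L}): if the longest queue is unique, $L(t)=L_{k^\star}(t)=(A(t)-C)^+=A(t)-C$; at a queue-meeting instant where two queues tie, $L(t)=\sum_{k\in\Q^{MAX}}(A_k(t)-D_k(t))^+$, and since the tied queues are then served proportionally to their rates with $\sum_{k\in\Q^{MAX}}D_k(t)=C$ this telescopes again to $A(t)-C$. \textbf{(ii)} Whenever $Q(t)=B$ and $A(t)=C$, every per-flow loss rate vanishes (in both cases of (\ref{def:L}): $(A-C)^+=0$ and, for tied queues, $A_k-D_k=0$), so (\ref{eqn:ODEsource2}) reduces to $dA/dt=\sum_k R_k^{-2}D_k(t)/C$, which is $\ge\min_k R_k^{-2}>0$ because $Q(t)>0$ forces $\sum_k D_k(t)=C$ by work-conservation.

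Granting these, the claim reduces to showing that $A(t)$ never drops below $C$ for $t>t_s$: indeed, if $A(t)\ge C$ on $[t_s,\infty)$ then, starting from $Q(t_s)=B$, fact (i) gives $dQ/dt\equiv0$, hence $Q(t)\equiv B$. To prove $A\ge C$, I would set $T=\inf\{t>t_s:\,A(t)<C\}$ and suppose $T<\infty$. On $[t_s,T)$ one has $A\ge C$, hence by (i) $Q\equiv B$ there; by continuity $A(T)=C$ and $Q(T)=B$. But then fact (ii) applies at $T$ and yields a strictly positive right derivative $dA/dt(T^+)\ge\min_kR_k^{-2}>0$, so $A(t)>C$ on a right-neighbourhood of $T$, contradicting that $T$ is the first instant at which $A$ falls below $C$. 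Hence $T=\infty$, and the proposition follows.

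The step I expect to be the main obstacle is making the continuation/first-exit argument rigorous in the presence of the discontinuous right-hand sides of (\ref{eqn:ODEsource2})--(\ref{eqn:ODEqueue}): the departure and loss rates $D_k,L_k$ jump through the indicators and the $\arg\max$, so the solution must be understood in the Carath\'eodory sense and facts (i)--(ii) read as statements about right (Dini) derivatives, and one must verify that the discrete set of queue-meeting instants --- where $\Q^{MAX}$ changes --- has measure zero and does not let $Q$ escape $B$ in between. The conceptual heart, and what actually makes the statement true rather than merely plausible, is the reflecting-barrier behaviour of $A$ at level $C$ in fact (ii): the multiplicative-decrease feedback shuts off exactly at $A=C$ while the additive-increase term stays bounded away from $0$, so $A$ rebounds before it can cross $C$; once that is in place the rest is the routine connectedness argument above.
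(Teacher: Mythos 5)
Your argument is correct and is essentially the paper's own proof: both reduce the claim to showing $A(t)\ge C$ for all $t>t_s$ (so that $dQ/dt=A-C-L=0$ pins $Q$ at $B$), and both establish this by observing that in the aggregate-rate ODE the multiplicative-decrease term vanishes at $A=C$ while the additive-increase term $\sum_k R_k^{-2}D_k/C$ stays strictly positive. You simply formalize the paper's ``it is clear that the total rate remains $\ge C$'' step as a first-exit-time contradiction and spell out the loss-rate accounting at tied queues.
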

\begin{proof}
To prove the statement is equivalent to prove that $A(t)\ge C$, $\forall t>t_s$.
In fact, under the condition $A(t)\ge C$, $\frac{dQ(t)}{dt}=A(t)-C-(A(t)-C)=0$.
Let us take eq.(\ref{eqn:ODEsource2}) and write it for $A(t), \forall t\ge t_s$:
\begin{align}\label{eqn:ODEsource3}
\frac{dA(t)}{dt}=&\sum_{k=1}^2 \frac{1}{R_k^2}\frac{D_k}{C}-\frac{A(t)}{2}(A(t)-C)^+ \I{Q(t)=B}
\end{align}
Since the derivative of $A(t)$ is lower-bounded by $(A(t)-C)^+$ and becomes zero when $A(t)=C$,  it is clear that starting from $A(t_s)>C$ the total rate will remain
greater or equal to $C$, $\forall t>t_s$, which is enough to prevent $Q$, initially full in $t_s$, from decreasing.
\end{proof}
\section{Auxiliary result $2$}\label{app:2}
The following result is the proof of the existence of the limit
the sequences $\{\epsilon_i\}$, $\{\gamma_i\}$, $\forall i\in \N,$ and by symmetry $\{\gamma_i\}$.
\begin{prop}
The positive sequences  $\{\epsilon_i\}$,  $\{\gamma_i\}$, with
\begin{enumerate}
 \item $\{\epsilon_i\}$, $\forall i  \in \N,$ defined by the recursion
$$\epsilon_{i+1}=\epsilon_{i}2\sqrt{\beta}f( \epsilon_{i}+\alpha\tau_{i},\gamma_{i},\tau'_i), \quad i>1$$
where
\begin{align*}
&\tau_i=\frac{C}{\alpha}\left(1+\sqrt{1-\frac{2\alpha\epsilon_{i-1}}{C^2}}\right), &\tau'_i=\frac{C}{\beta}\left(1+\sqrt{1-\frac{2\beta\gamma_{i-1}}{C^2}}\right);\nonumber
\end{align*}
\item $\{\gamma_i\}$, $\forall i  \in \N,$ defined by the recursion
$$\gamma_{i+1}=\gamma_{i}2\sqrt{\alpha}g(\gamma_{i}+\beta\tau'_i,\epsilon_{i},\tau_{i+1}), \quad  i>1$$
\end{enumerate}
are convergent and their limits are $\widetilde\epsilon=0$, $\widetilde\gamma=0$.
\end{prop}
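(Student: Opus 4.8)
The plan is to show that each step of the two (interleaved) recursions multiplies the current iterate by a factor that is strictly, and uniformly, less than $1$, so that both sequences decay geometrically to $0$. \emph{Positivity and monotonicity.} The functions $f(\cdot)$ and $g(\cdot)$ are quotients of positive quantities: the numerator is an exponential, and in the denominator the term $a\,e^{(b-C)^2/(2\beta)}\sqrt{2\pi}\bigl(Erf(\tfrac{b-C+\beta t}{\sqrt{2\beta}})-Erf(\tfrac{b-C}{\sqrt{2\beta}})\bigr)$ is strictly positive for $t>0$, since $Erf$ is increasing and its first argument exceeds its second. Hence $f,g>0$ and all iterates stay positive. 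Moreover, by the auxiliary result of Appendix~\ref{app:1} we have $A(t)\ge C$ throughout the saturation regime, so over one $A_2^{ON}$-phase the net increment of the served virtual queue is nonnegative; this increment is exactly $\int_0^{\tau'_i}(A_2(s)-C)\,ds=\tfrac{\tau'_i}{2}(2\gamma_i-2C+\beta\tau'_i)$, the exponent appearing in $f$, so its exponential factor is $\le 1$. Combining the two remarks gives $0<2\sqrt\beta\,f(\epsilon_i+\alpha\tau_i,\gamma_i,\tau'_i)<1$ and, by symmetry, $0<2\sqrt\alpha\,g(\gamma_i+\beta\tau'_i,\epsilon_i,\tau_{i+1})<1$ for every $i$. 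Thus $\{\epsilon_i\}$ and $\{\gamma_i\}$ are strictly decreasing and bounded below by $0$, hence convergent, say $\epsilon_i\downarrow\widetilde\epsilon\ge 0$ and $\gamma_i\downarrow\widetilde\gamma\ge 0$.

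\emph{Decay to zero.} Next I would establish a priori bounds confining all arguments of $f$ and $g$ to a fixed compact set: the rate at the end of an in-service phase satisfies $C\le\epsilon_i+\alpha\tau_i\le\epsilon_0+2C$ (it grows linearly at rate $\alpha$ for a duration $\tfrac{C}{\alpha}\le\tau_i\le\tfrac{2C}{\alpha}$), and $\epsilon_i\le\epsilon_0$, $\gamma_i\le\gamma_0$ remain below the thresholds $C^2/(2\alpha)$, $C^2/(2\beta)$ that keep the radicands in $\tau_i,\tau'_i$ real — this being inherited from the transient that brings the system to $t_s$. On such a set $\tau'_i$ is bounded below by a positive constant, so the $Erf$-difference in the denominator of $f$ is $\ge c_0>0$ while the prefactor $\epsilon_i+\alpha\tau_i\ge C$; hence $2\sqrt\beta\,f(\epsilon_i+\alpha\tau_i,\gamma_i,\tau'_i)\le\dfrac{2\sqrt\beta}{2\sqrt\beta+Cc_0}=:\bar\rho<1$, giving $\epsilon_{i+1}\le\bar\rho\,\epsilon_i$, and symmetrically for $\gamma$. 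Therefore $\epsilon_i,\gamma_i\to 0$, i.e. $\widetilde\epsilon=\widetilde\gamma=0$. (Equivalently, once convergence is known one can just pass to the limit in the recursions: the limiting multiplicative factor cannot equal $1$, because the $Erf$-difference stays bounded away from $0$ as $\tau'_i,\tau_i\not\to 0$, so a positive fixed point is impossible.)

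\emph{Where the work is.} The genuine obstacle is the coupling and the delay: the $\epsilon$-step feeds on $\gamma_i$ (through both the argument $b=\gamma_i$ and the duration $\tau'_i$), the $\gamma$-step feeds on $\epsilon_i$, and each duration is set by the preceding iterate, so this is a coupled, delayed recursion rather than two scalar ones. The technical core is therefore to produce a forward-invariant compact region for the joint map $(\epsilon_i,\gamma_i)\mapsto(\epsilon_{i+1},\gamma_{i+1})$ on which the prefactors $\epsilon_i+\alpha\tau_i$, $\gamma_i+\beta\tau'_i$, the two durations, and the two $Erf$-differences are simultaneously pinched between positive constants, and to check that the iterates enter it after finitely many steps. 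Granted that, the uniform contraction follows from monotonicity of $Erf$ and boundedness, both sequences vanish in the limit, and substituting $\widetilde\epsilon=\widetilde\gamma=0$ into the per-phase description of Section~\ref{sec:SQF} recovers the limit cycle asserted in the Lemma.
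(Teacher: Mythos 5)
Your proof is correct and rests on the same key estimate as the paper's --- namely that the denominator of $f$ exceeds $2\sqrt{\beta}$ by a term of the form $a\,e^{(b-C)^2/(2\beta)}\sqrt{2\pi}\,(Erf(\cdot)-Erf(\cdot))$ that is uniformly bounded below (using $a=\epsilon_i+\alpha\tau_i\ge C$ and an $Erf$-difference bounded away from zero because the phase durations do not shrink), while the numerator exponential is $\le 1$ at the end of a phase. Where you diverge is in how the recursion is closed. The paper first majorizes the prefactor $\epsilon_i$ by $\epsilon_i+\alpha\tau_i\le\epsilon_i+2C$ and then iterates the affine bound $\epsilon_{i+1}\le K(\epsilon_i+2C)$ with $K=2\sqrt{\beta}/\bigl(2\sqrt{\beta}+2Ce^{C^2/(2\beta)}\sqrt{2\pi}\,Erf(C/\sqrt{2\beta})\bigr)$; strictly speaking this only yields $\limsup_i\epsilon_i\le 2CK/(1-K)$, a positive (if astronomically small) constant, and the unrolled sum $\sum_{j=1}^{i-1}(2C)^{i-j}K^j$ displayed in the paper does not in fact tend to $0$ (its $j=1$ term is $(2C)^{i-1}K$). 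You instead keep the recursion in its purely multiplicative form $\epsilon_{i+1}=\epsilon_i\,\rho_i$ with $\rho_i\le\bar\rho<1$, which gives genuine geometric decay to $0$ and so actually repairs that weakness --- this is the cleaner route given the recursion as stated. Your proposal is also more honest about the remaining technical burden: the paper silently assumes the arguments of $f$ and $g$ stay in the range where the bounds apply, whereas you flag the need for a forward-invariant compact set for the coupled map $(\epsilon_i,\gamma_i)\mapsto(\epsilon_{i+1},\gamma_{i+1})$; that step should indeed be carried out (the bounds $\epsilon_i\le C^2/(2\alpha)$, $\gamma_i\le C^2/(2\beta)$ needed to keep $\tau_i,\tau'_i$ real are used but never verified in the paper). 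The only caveat worth recording is that both arguments prove convergence to $0$ for the recursion as literally written (prefactor $\epsilon_i$); if one instead uses the physically exact update, whose prefactor is the peak rate $\epsilon_i+\alpha\tau_i$, the limit is a strictly positive but negligible constant, which is why the paper's statement should be read as holding up to the approximation $C\gg 1$ it invokes.
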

\begin{proof}
Take the sequence $\{\epsilon_i\}$. It holds:
\begin{align*}
\epsilon_{i+1} &\le \left(\epsilon_{i}+\alpha\tau_i\right)2\sqrt{\beta} f(\epsilon_{i}+\alpha\tau_i,0,\tau'_i) \nonumber \\
               &\le \left(\epsilon_{i}+2C\right)\frac{2\sqrt{\beta}}{2\sqrt{\beta}+2Ce^{\frac{C^2}{2\beta}}\sqrt{2\pi} Erf(C/\sqrt{2\beta})}
\nonumber \\
           &\le\left(\epsilon_{i}+2C\right)K
\le \epsilon_1 K^{i-1}+\sum_{j=1}^{i-1} (2C)^{i-j}K^j \nonumber \\
&= \epsilon_1 K^{i-1}+(2C)^i\sum_{j=1}^{i-1}\left(\frac{K}{2C}\right)^j
\end{align*}
where we defined  $K=2\sqrt{\beta}/\left(2\sqrt{\beta}+2Ce^{\frac{C^2}{2\beta}}\sqrt{2\pi} Erf(C/\sqrt{2\beta})\right)$.\\
It follows that 
\begin{align}\label{eq:lim}
\lim_{i\rightarrow \infty} \epsilon_1 K^{i-1}+(2C)^i\sum_{j=1}^{i-1}\left(\frac{K}{2C}\right)^j =0.
\end{align}
Indeed, it results $K<1<2C$, since the link capacity satisfies $C>>1$.
Since we assumed $\epsilon_i>0$, $\forall i>1$, it implies that $\widetilde \epsilon\triangleq \lim_{i\rightarrow \infty} \epsilon_i=0$, that is the sequence converges to zero.
By symmetry, $\widetilde \gamma \triangleq \lim_{i\rightarrow \infty} \gamma_i=0$, which concludes the proof.
\end{proof}
\section{Proof of Corollary \ref{cor:SQF}}\label{app:3}
\begin{proof}
We denoted by $\overline{A}_1$ and $\overline{A}_2$ the mean values of the sending rates in steady state, that is
\begin{align}\label{def:mean_sending_rates_SQF}
&\overline{A}_1 \triangleq \frac{1}{T} \int_0^T \widetilde{A}_1(u) du, \quad \overline{A}_2 \triangleq \frac{1}{T} \int_0^T \widetilde{A}_2(u) du.
\end{align}
Decompose the integral over the limit cycle $T$ in the sum of the integral over phase $A_1^{ON}$ and $A_2^{ON}$.
It results:
\begin{align}\label{eq:mean_sending_rates_SQF}
\overline{A}_1 &= \frac{1}{T}\left( \int_0^{2C/\alpha}\alpha u du+ \int_{2C/\alpha}^{2C/\alpha+2C/\beta}\widetilde{f}(2C,0,u)du \right)\\
&\approx \frac{1}{T}\left(\frac{2C^2}{\alpha}+ \int_{2C/\alpha}^{2C/\alpha+2C/\beta} f(2C,0,u)du \right) \nonumber \\
&= \left(\frac{\beta}{\alpha+\beta}C+\frac{\alpha\beta}{2C(\alpha+\beta)}\log\left(1+\frac{2\sqrt{2\pi}}{\sqrt{\beta}}C e^{\frac{C^2}{2\beta}}Erf\left(\frac{C}{\sqrt{2\beta}}\right)\right)\right).\nonumber
\end{align}
where we approximated $\widetilde f$ with $f$.
Similarly, one gets the approximate expression for $\overline{A}_2$.
It is worth observing that the above approximation is justified by the numerical comparison of mean sending rates against $ns2$.
In addition, it is important to distinguish among TCP \textit{sending rates} $\widetilde{A}_1$, $\widetilde{A}_2$ (where $\sim$ stands for the stationary version) and throughputs, $\widetilde{X}_1$, $\widetilde{X}_2$:
during phase $A_1^{ON}$, $Q_1$ is served at capacity $C$, so $\widetilde{X}_1=C$ and $\widetilde{X}_2=0$, as all packets sent by flow $2$ are lost.
On the contrary, during phase $A_2^{ON}$, $\widetilde{X}_2=C$, whereas $\widetilde{X}_1=0$.
It results that:
\begin{align*}
\overline{X}_1 &= \frac{\alpha}{\alpha+\beta}C, &\overline{X}_2 &= \frac{\beta}{\alpha+\beta}C.
\end{align*}
In order to compute the value of the virtual queues in steady state, it suffices to recall that $\widetilde{Q}_1(t)=B-\widetilde{Q}_2(t)$ and over phase $A_k^{ON}$, $\widetilde{Q}_k=\frac{B}{2}+\int (A_k(t)-C)dt=\frac{B}{2}+\int (\alpha t-C)dt$.
Hence,
\begin{align}\label{mean-queue1}
\overline{Q}_1&=\frac{1}{T}\left(\int_{\text{phase}A_1^{ON}} \left(\frac{B}{2}+\widetilde{Q}_1(t)\right) dt + 
\int_{\text{phase}A_2^{ON}}(B-\widetilde{Q}_2(t)) dt \right)\nonumber \\
&= \frac{B}{2}+\frac{C^2(\alpha-\beta)}{3\alpha\beta},
\end{align}
where $T$ denotes the limit-cycle duration, i.e. $T=2C(\frac{1}{\alpha}+\frac{1}{\beta})$.
Similarly, one gets
\begin{align}\label{mean-queue2}
\overline{Q}_2&=\frac{B}{2}-\frac{C^2(\alpha-\beta)}{3\alpha\beta}.
\end{align}
Under the assumption $\alpha>\beta$, it results that, as expected, $\overline{Q}_1>B/2$, while $\overline{Q}_2<B/2$.\\
Note in addition that the approximation $\widetilde{f}(\cdot) \approx f(\cdot)$ in (\ref{eq:mean_sending_rates_SQF}) has no impact neither on the mean throughput nor on the virtual queue formulas.

\end{proof}

\end{appendix}

\end{document}